%%%%%%%%%%%%%%%%%%%%%%%%%%%%%%%%%%%%%%%%%%%%%%%%%%%%%%%%%%%%%%%%%%%%%%%%%%%%%%%%
%2345678901234567890123456789012345678901234567890123456789012345678901234567890
%        1         2         3         4         5         6         7         8

\documentclass[letterpaper, 10 pt, conference]{ieeeconf}  % Comment this line out if you need a4paper

\IEEEoverridecommandlockouts                              % This command is only needed if 
                                                          % you want to use the \thanks command

\overrideIEEEmargins                                      % Needed to meet printer requirements.

%In case you encounter the following error:
%Error 1010 The PDF file may be corrupt (unable to open PDF file) OR
%Error 1000 An error occurred while parsing a contents stream. Unable to analyze the PDF file.
%This is a known problem with pdfLaTeX conversion filter. The file cannot be opened with acrobat reader
%Please use one of the alternatives below to circumvent this error by uncommenting one or the other
%\pdfobjcompresslevel=0
%\pdfminorversion=4

% See the \addtolength command later in the file to balance the column lengths
% on the last page of the document

% The following packages can be found on http:\\www.ctan.org
\usepackage{graphics} % for pdf, bitmapped graphics files
\usepackage{amsmath} % assumes amsmath package installed
\usepackage{amssymb}  % assumes amsmath package installed
\usepackage{enumerate}
\usepackage{mathtools}
\usepackage{graphicx}
\usepackage{graphicx,lipsum}
\usepackage[noadjust]{cite}
%Additional packages and definitions

% \usepackage{subcaption}  % Ensure the subcaption package is loaded
% \captionsetup[subfigure]{labelfont=bf}

\newtheorem{thm}{\textbf{Theorem}}
\newtheorem{lem}{\textbf{Lemma}}
\newtheorem{prop}{\textbf{Proposition}}
\newtheorem{cor}{\textbf{Corollary}}
\newtheorem{defn}{\textbf{Definition}}

\newtheorem{assum}{\textbf{Assumption}}

% Added symbols
% Define \bm to represent \boldsymbol{}
\newcommand{\bm}[1]{\boldsymbol{#1}}
\DeclareMathOperator{\diag}{diag}
\DeclareMathOperator{\sgn}{sgn}

\title{\LARGE \bf
Learning Piecewise Residuals of Control Barrier Functions for Safety of Switching Systems using Multi-Output Gaussian Processes
}

\author{Mohammad Aali and Jun Liu% <-this % stops a space
% \thanks{This work was supported in part by Nutrien Ltd. and the NSERC of Canada.}% <-this % stops a space
\thanks{This work was supported in part by Nutrien Ltd., NSERC of Canada, and the Canada Research Chairs program.}
\thanks{Mohammad Aali is with the Department of Applied Mathematics, University of Waterloo, Canada,
        {\tt\small mohammad.aali@uwaterloo.ca}}%
\thanks{Jun Liu is with the Department of Applied Mathematics, University of Waterloo, Canada,
        {\tt\small j.liu@uwaterloo.ca}}%
}

\begin{document}

\maketitle
\thispagestyle{empty}
\pagestyle{empty}

%%%%%%%%%%%%%%%%%%%%%%%%%%%%%%%%%%%%%%%%%%%%%%%%%%%%%%%%%%%%%%%%%%%%%%%%%%%%%%%%
\begin{abstract}
Control barrier functions (CBFs) have recently been introduced as a systematic tool to ensure safety by establishing set invariance. When combined with a control Lyapunov function (CLF), they form a safety-critical control mechanism. However, the effectiveness of CBFs and CLFs is closely tied to the system model. In practice, model uncertainty can jeopardize safety and stability guarantees and may lead to undesirable performance. In this paper, we develop a safe learning-based control strategy for switching systems in the face of uncertainty. We focus on the case that a nominal model is available for a true underlying switching system. This uncertainty results in piecewise residuals for each switching surface, impacting the CLF and CBF constraints. We introduce a batch multi-output Gaussian process (MOGP) framework to approximate these piecewise residuals, thereby mitigating the adverse effects of uncertainty. A particular structure of the covariance function enables us to convert the MOGP-based chance constraints CLF and CBF into second-order cone constraints, which leads to a convex optimization. We analyze the feasibility of the resulting optimization and provide the necessary and sufficient conditions for feasibility. The effectiveness of the proposed strategy is validated through a simulation of a switching adaptive cruise control system.
\end{abstract}
%%%%%%%%%%%%%%%%%%%%%%%%%%%%%%%%%%%%%%%%%%%%%%%%%%%%%%%%%%%%%%%%%%%%%%%%%%%%%%%%
% \section{INTRODUCTION}\label{sec1}
\section{Introduction}\label{sec1}
% In recent years, the requirements of modern control applications have extended beyond mere stability assurance to include safety verification. The notion of safety holds such major importance that we deploy an autonomous system in the real world only if it is verified to be safe or achieves a high level of probabilistic safety. 
Many real-world control problems are modeled as hybrid dynamical systems, which involve a coupling between continuous dynamics and discrete events. Research in the stability analysis of such systems has grown significantly in recent decades \cite{branicky1998multiple, daafouz2002stability, liu2011input}. More recently, Gaussian processes (GPs) are used to approximate hybrid residual dynamics model which can be integrated into a stabilizing model predictive control design ensuring probabilistic constraint satisfaction \cite{d2023stochastic}. However, in the modern control applications, the focus has expanded from mere stability assurance to encompass safety verification. CBFs have recently provided a powerful theoretical tool for synthesizing controllers that ensure the safety of dynamical systems. They were initially applied to autonomous driving and bipedal walking \cite{ames2019control}. Stability guarantees are then added via a unified control Lyapunov function-control barrier function (CLF-CBF) framework by solving a state-dependent quadratic programming (QP) problem at each time step to compute the control input \cite{romdlony2016stabilization}.
% Many safety-critical control systems are modeled as switching systems which can be viewed as higher-level abstractions of hybrid systems \cite{liberzon2003switching}. Although we have many papers addressing the stability analysis of the switching systems \cite{branicky1998multiple, daafouz2002stability, liu2011input}, the problem of safety verification of these systems is addressed in a few papers. 
In the context of hybrid systems, barrier functions were initially used as safety certificates in \cite{prajna2004safety}. After the introduction of CBFs, CLF-CBF design has been extensively used in safety-critical methods for hybrid systems \cite{ames2019control, marley2021synergistic, meng2023lyapunov}. However, there are very few studies on learning-based safety verification of switching systems via CBFs in the presence of uncertainty.

Learning-based CBFs for uncertain systems can be divided into two approaches. The first approach learns the unknown dynamics and uses the approximated model to derive safety certificates. In \cite{wang2018safe}, a GP is employed to learn the unmodeled dynamics for the safe navigation of quadrotors with CBFs. A GP-based systematic controller synthesis method is introduced in \cite{jagtap2020control}, offering an inherently safe control strategy. In \cite{fan2020bayesian}, the authors employ a Bayesian approach to learn model uncertainty while ensuring safety during the learning process. Learning the complete system dynamics with GPs is computationally demanding. Therefore, a common approach is to learn each component of the vector fields individually, disregarding their correlations. Furthermore, existing studies typically neglect uncertainty in the actuation term to simplify the learning-based model.
% However, learning the entire system dynamics using GP is computationally expensive. As a result, a prevalent assumption is to learn each element of the vector fields individually and overlook their correlations. Also, all the aforementioned works do not consider uncertainty in the actuation term to reduce the complexity of the learning-based model.

These limitations lead us to the second approach, which approximates the projection of the residual onto the CLF and CBF. In \cite{csomay2021episodic}, an episodic learning method using neural networks is employed to approximate the impact of unmodeled dynamics of hybrid bi-pedal robot on system safety, eliminating the need for exhaustive data collection. However, methods employing neural networks must achieve precise approximations of the dynamics to formally ensure safety. In general, this requirement is challenging to meet, but GPs can provide uncertainty quantification with analytical confidence bounds. In \cite{castaneda2021pointwise}, a GP-based min-norm controller stabilizing an unknown control affine system while guaranteeing safety using CLF-CBF method is introduced. The effect of the uncertainty in the stability and safety constraints is approximated by a GP model. This technique results in the approximation of a real-valued function instead of a vector field, leading to a significant reduction in the model complexity for high-dimensional systems. In \cite{aali2024learning}, the effect of uncertainty on high-order CBFs is quantified using GPs.

In this paper, we develop a batch MOGP model to learn the piecewise residuals in each switching surface. We use the multi-output design to learn the piecewise residuals corresponding to CLF and CBF simultaneously, and the batched design facilitates for efficient computation of the model's hyperparameters.
The main contributions of this paper relative to prior works are presented as follows:
\begin{itemize}
\item We show that the switching system imposes piecewise residuals on CLF and CBF constraints. We develop a batch MOGP model to efficiently approximate residuals in each switching surface.
\item We demonstrate that by selecting a particular form of the covariance function in the batch MOGP structure, the resulting min-norm controller with an uncertainty-aware chance constraint can be converted into a second-order cone program (SOCP).
\item We analyze the feasibility of the resulting constrained optimization problem and determine the necessary and sufficient conditions for feasibility.
\end{itemize}
A conference version of this paper will appear in \cite{ECC2024}. This extended version includes additional details omitted in the conference paper.
% \section{PROBLEM SETUP and STATEMENT}\label{sec2}
\section{Problem setup and statement}\label{sec2}
Consider the control-affine switching system of the form
% \begin{align}
%     \dot {\bm{x}} &= \hat{f}(\bm{x})+ \sum^R_{r=1} \delta_r f_r(\bm{x}) + (\hat{g}(\bm{x}) + \sum^R_{r=1} \delta_r g_r(\bm{x}) ) \bm{u},\\
%     \delta _{r} &= \begin{cases} {1}&{{\text{if}} \left ( \bm x,\bm u \right ) \in {R_r},} \\ {0}&{{\text{otherwise}},} \end{cases} \nonumber
%     \label{eq1}
% \end{align}
\begin{align}
    \dot {\bm{x}} &= \sum^R_{r=1} \delta_r \left( f_r(\bm{x})+ g_r(\bm{x}) \bm{u} \right),\label{eq1}\\
     % &= f(\bm{x})+ g(\bm{x}) \bm{u},\\
    \delta _{r} &= \begin{cases} {1}&{{\text{if}}\quad \bm x \in \mathcal{R}_r,} \\ {0}&{{\text{otherwise}},} \end{cases} \nonumber
    \label{eq1}
\end{align}where  $\boldsymbol{x} \in \mathcal{X} \subset \mathbb{R}^n$ is the state, and $\boldsymbol{u} \in \mathbb{R}^m$ is the input, and $\mathcal{X}$ is a compact and convex set. The function $\delta_r:\mathcal{X} \rightarrow \{0, 1\}, r \in \mathcal{I}$ is a state-dependent switching signal with the finite index set $\mathcal{I} = \{1, \dots, R \}$ which indicates the index of the active switching surface $\mathcal{R}_r \subset \mathcal{X}$. The vector fields $f_r:\mathbb{R}^n \rightarrow \mathbb{R}^n$ and $g_r:\mathbb{R}^n \rightarrow \mathbb{R}^{m \times n}, r \in \mathcal{I}$, are the unknown dynamics, which are assumed to be locally Lipschitz in their arguments.
% The system (\ref{eq1}) is control affine and can be represented by the general form 
% \begin{equation*}
%     \dot {\bm{x}} = f(\bm{x})+ g(\bm{x}) \bm{u}.
% \end{equation*}
Let $\pi:\mathbb{R}^n \rightarrow \mathbb{R}^{m}$ be a locally Lipschitz continuous state feedback control law. We assume that the impulse effects are absent, i.e., the reset map is the identity. Furthermore, we assume that the closed-loop system
\begin{equation}
    \dot {\boldsymbol{x}} = F_{cl}(\boldsymbol{x}) \triangleq \sum^R_{r=1} \delta_r\left( f_r(\bm{x})+ g_r(\bm{x}) \pi(\bm x) \right),
    \label{eq2}
\end{equation}
satisfies Caratheodory condition \cite{hale2009ordinary}. 
% The solution to (\ref{eq2}) is an absolutely continuous function with the property that for each switching signal.
Then, for any initial state $\boldsymbol{x}(t_0) = \boldsymbol{x}_0$, system (\ref{eq2}) admits a solution $\boldsymbol{x}(t)$ in the sense of Caratheodory defined on a maximal interval of existence $I(\boldsymbol{x}_0) = [t_0, I_{max})$. In this paper, we consider complete solutions, thus $I_{max} = \infty$.
% Thus, the partially known system (\ref{eq1}) which can be represented as
% \begin{equation}
%     \dot {\boldsymbol{x}} = f(\boldsymbol{x})+ g(\boldsymbol{x}) \boldsymbol{u},
%     \label{eq1}
% \end{equation}
% is also locally Lipschitz.
\begin{assum}\label{assum1}
    We assume that the state space is partitioned into $R$ non-overlapping regions labeled as $\mathcal{R}_r \subset \mathcal{X}$, such that they cover the whole state space, i.e., $\cup_{r=1}^R \mathcal{R}_r = \mathcal{X}$, and $\mathcal{R}_i \cap \mathcal{R}_j = \emptyset$ for all $i,j \in \{1, \dots, R \}, i \neq j$.
\end{assum}
Assumption \ref{assum1} requires system (\ref{eq1}) to be well-posed, i.e., for all $\bm x \in \mathcal{X}$ there exists only one active index $r \in \mathcal{I}$ satisfying the membership condition.
% Let $\pi:\mathbb{R}^n \rightarrow \mathbb{R}^{m}$ be a locally Lipschitz continuous state feedback control law. Then, the closed-loop system
% \begin{equation}
%     \dot {\boldsymbol{x}} = F_{cl}(\boldsymbol{x}) \triangleq \sum^R_{r=1} \delta_r\left( f_r(\bm{x})+ g_r(\bm{x}) \pi(\bm x) \right),
%     \label{eq2}
% \end{equation}
%   is locally Lipschitz continuous based on the assumption on local Lipschitz continuity of $f_r$, $g_r$, and $\pi$. Then, for any initial state $\boldsymbol{x}(0) = \boldsymbol{x}_0$, the system (\ref{eq2}) has a unique solution $\boldsymbol{x}(t)$ defined on a maximal interval of existence $I(\boldsymbol{x}_0) = [0, I_{max})$. In this paper, we consider forward complete closed-loop systems, thus $I_{max} = \infty$. We assume that the states of the closed-loop system (\ref{eq2}), does not jump at the switching instants. Thus, the solution $\bm x(t)$ is continuous for all $t \in I(\bm x_0)$.
%   A set $\mathcal{C} \subset \mathbb{R}^n$ is said to be forward invariant for (\ref{eq2}), if for all $x(0) \in \mathcal{C}$, the solution $\boldsymbol{x}(t) \in \mathcal{C}$ for all $t \in I(\boldsymbol{x}_0)$.
\subsection{Control Lyapunov Functions for Switching Systems}\label{sec2a}
\begin{defn}[Class $\mathcal{K}$ function \cite{khalil2002nonlinear}]\label{def1}
    We say that a continuous function $\alpha:[0, a) \rightarrow [0, \infty), a>0$, belongs to class $\mathcal{K}$, if it is strictly increasing and $\alpha(0)=0$.
\end{defn}

\begin{defn}[Control Lyapunov functions]\label{def2}
    A continuously differentiable function $V:\mathcal{X} \rightarrow \mathbb{R}$ is a CLF for system (\ref{eq1}) if there exist positive constants $\lambda, c_1, c_2 > 0$ such that for all $\bm x \in \mathcal{X}$ and $r \in \mathcal{I}$,
    \begin{align}
        &c_1 \|\bm x \|^2 \leq V(\bm x) \leq c_2 \|x \|^2,\nonumber\\
        % &\inf _{\bm u\in \mathbb{R}^m} [L_f V(\bm x)+L_g V(\bm x) \bm u + \lambda V(\bm x)]\leq 0.
        &\sum^R_{r=1} \delta_r \left( L_{f_r} V(\bm x)+L_{g_r} V(\bm x) \bm u \right ) \leq -\lambda V(\bm x),
    \label{eq3}
    \end{align}
    where $L_{f_r} V(\bm x)$ and $L_{g_r} V(\bm x)$ are the Lie derivatives of $V(\bm x)$ with respect to the corresponding vector fields.
\end{defn}
Therefore, if the subsystems in (\ref{eq1}) share a CLF, the rate of decrease of $V$ along the vector fields, given by (\ref{eq3}), is not affected by the switching, thus exponential stability is uniform with respect to $\delta_r$. Also, if there exists a compact subset $D \subseteq \mathcal{X}$ including the origin such that (\ref{eq3}) holds for some $\lambda > 0$, and there exists a sublevel set $\Omega = \{ \bm x \in D \mid V(\bm x) \leq c \}$, where $c > 0$, then the origin is locally exponentially stabilizable from $\Omega$.
\subsection{Control Barrier Functions}
CBF method defines safety based on the notion of set invariance, where a subset of the state space is specified as the safe set. This set is characterized by the zero-superlevel set of a continuously differentiable function $h:\mathbb{R}^n \rightarrow \mathbb{R}$ as
\begin{align}
    \mathcal{C} &= \{\boldsymbol{x} \in \mathbb{R}^n \mid h(\boldsymbol{x})\geq 0\},\nonumber\\
    \partial \mathcal{C} &= \{\boldsymbol{x} \in \mathbb{R}^n \mid h(\boldsymbol{x})= 0\}.\label{eq4}
\end{align}

CBFs provide a constructive tool for achieving forward invariance of set $\mathcal{C}$. We define a CBF as follows:
\begin{defn}[Control barrier function \cite{ames2019control}]\label{def3}
     Given the set $\mathcal{C}$ as defined in (\ref{eq4}), the continuously differentiable function $h(\boldsymbol{x})$ is called a CBF on a domain $\mathcal{D}$ with $\mathcal{C} \subset \mathcal{D} \subset \mathbb{R}^n$, if there exists a class $\mathcal{K}$ function
    $\alpha$ such that the following holds for all $\boldsymbol{x} \in \mathcal{D}$ and $r \in \mathcal{I}$
\begin{equation}
    \sum^R_{r=1} \delta_r \left( L_{f_r} h(\bm x)+L_{g_r} h(\bm x) \bm u \right )  \geq -\alpha(h(\bm x)).
    \label{eq5}
\end{equation}
\end{defn}
We can derive the following Corollary based on Nagumo's theorem \cite{blanchini2008set} for the forward invariance of the safe set $\mathcal{C}$:
\begin{cor}\label{cor1}
     Given CBF $h:\mathbb{R}^n \rightarrow \mathbb{R}$ with the associated set $\mathcal{C}$ in (\ref{eq3}), if $\nabla h(\bm x) \neq 0$ for all $\bm x \in \partial \mathcal{C}$, any Lipschitz continuous controller $\boldsymbol{u}(\boldsymbol{x})$ satisfying (\ref{eq5}), guarantees that the set $\mathcal{C}$ is forward invariant for the system (\ref{eq1}) and thus safe.
\end{cor}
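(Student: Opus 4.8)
The plan is to reduce the invariance claim to a subtangentiality (Nagumo) condition that I verify at the boundary $\partial \mathcal{C}$, exploiting that under Assumption \ref{assum1} exactly one mode is active at each state. First I would fix an initial condition $\bm x_0 \in \mathcal{C}$, so that $h(\bm x_0) \geq 0$, and consider the Caratheodory solution $\bm x(t)$ of the closed-loop system (\ref{eq2}), which exists on $[t_0, \infty)$ by the standing assumptions. Because the regions $\mathcal{R}_r$ partition $\mathcal{X}$, every state has a unique active index $r$, and since the solution is absolutely continuous it is differentiable almost everywhere; the chain rule then gives $\frac{d}{dt} h(\bm x(t)) = \nabla h(\bm x(t)) \cdot F_{cl}(\bm x(t)) = L_{f_r} h(\bm x(t)) + L_{g_r} h(\bm x(t)) \pi(\bm x(t))$ for a.e.\ $t$, so the CBF inequality (\ref{eq5}) yields $\dot h(\bm x(t)) \geq -\alpha(h(\bm x(t)))$ almost everywhere.

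The core step is to convert this differential inequality into forward invariance. Restricting attention to boundary points where $h(\bm x) = 0$, the class-$\mathcal{K}$ property $\alpha(0) = 0$ collapses the CBF inequality to $\dot h \geq 0$. Since $\nabla h(\bm x) \neq 0$ on $\partial \mathcal{C}$, the Bouligand tangent cone to $\mathcal{C}$ at a boundary point $\bm x$ is the half-space $\{\bm v : \nabla h(\bm x) \cdot \bm v \geq 0\}$, so $\dot h \geq 0$ states precisely that the active vector field lies in this tangent cone. This is the subtangentiality hypothesis of Nagumo's theorem, and I would invoke the version in \cite{blanchini2008set} to conclude that $\mathcal{C}$ is forward invariant, hence safe. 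As an alternative closing argument I would use a comparison lemma: the scalar problem $\dot y = -\alpha(y)$ with $y(t_0) = h(\bm x_0) \geq 0$ has $y \equiv 0$ as an equilibrium (because $\alpha(0)=0$) and therefore keeps $y(t) \geq 0$, so by comparison $h(\bm x(t)) \geq y(t) \geq 0$ and thus $\bm x(t) \in \mathcal{C}$ for all $t$.

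The main obstacle I anticipate is the discontinuity of $F_{cl}$ across the switching surfaces: Nagumo's theorem is classically stated for continuous vector fields, whereas here the right-hand side jumps between modes. The resolution rests on Assumption \ref{assum1}, which guarantees that the active mode is unique at every state, so $F_{cl}$ is well defined pointwise and the tangent-cone test can be applied mode-by-mode; moreover, because the condition of Definition \ref{def3} is required to hold for every $r \in \mathcal{I}$, the inclusion survives even where a trajectory evolves along a switching surface. I would therefore be careful to phrase the invariance in the Caratheodory sense consistent with the paper's setup, and to check that the a.e.\ differentiability of $t \mapsto h(\bm x(t))$ is enough to support the Nagumo/comparison conclusion despite the switching.
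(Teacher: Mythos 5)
Your proof is correct and, in its primary route, coincides with the paper's: the paper offers no written proof of Corollary \ref{cor1} beyond the one-line appeal to Nagumo's theorem in \cite{blanchini2008set}, and your subtangentiality argument is exactly that appeal, spelled out. What you add is genuinely useful in two respects. First, you make explicit the Caratheodory machinery (absolute continuity of $t \mapsto h(\bm x(t))$, the a.e.\ chain rule, uniqueness of the active mode under Assumption \ref{assum1}) that the paper leaves implicit. Second, and more substantively, you correctly identify that classical Nagumo is stated for continuous vector fields, whereas the closed-loop field here jumps across switching surfaces; your mode-by-mode resolution of this is still somewhat informal (for a discontinuous right-hand side the tangency test must in general be applied to a Filippov/Krasovskii convexification of the field, not to its pointwise values), but your alternative closing via the comparison lemma repairs this completely: the differential inequality $\frac{d}{dt}h(\bm x(t)) \geq -\alpha(h(\bm x(t)))$ holds a.e.\ along any Caratheodory solution regardless of where the switches occur, and scalar comparison with $\dot y = -\alpha(y)$, $y(t_0) = h(\bm x_0) \geq 0$, yields $h(\bm x(t)) \geq y(t) \geq 0$ for all $t$. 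In that sense the comparison-lemma argument, not Nagumo, is what makes the proof airtight within the paper's solution concept, and it is the route you should treat as primary rather than as an aside. (A minor point: the corollary's reference to $\mathcal{C}$ ``in (\ref{eq3})'' is a typo in the paper for (\ref{eq4}); your reading is the intended one.)
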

Inequalities (\ref{eq3}) and (\ref{eq5}) impose an affine condition on the control values which can be used to ensure stability and safety.
Consider the CLF constraint (\ref{eq3}), primarily designed to fulfill control objectives. Our goal is to apply the control law that satisfies (\ref{eq3}) to the system only if it complies with (\ref{eq5}). In practice, this is accomplished by solving the following QP optimization:
\begin{subequations}\label{eq6}
    \begin{align}
        \boldsymbol{u}_{s} ={} &\underset{\left( \bm u , d \right) \in \mathcal{R}^{m+1}}{\arg\min} \hspace{4pt} {\| \bm u \|^2_2 + \rho d^2} \label{eq6a}\\
        % &\textrm{s.t.}\nonumber\\
        \textrm{s.t.}\quad&\sum^R_{r=1} \delta_r \left( L_{f_r} V(\bm x)+L_{g_r} V(\bm x) \bm u \right ) + \lambda V(\bm x) \leq d,\label{eq6b}\\
        &\sum^R_{r=1} \delta_r \left( L_{f_r} h(\bm x)+L_{g_r} h(\bm x) \bm u \right )  + \alpha(h(\bm x)) \geq 0, \label{eq6c}
    \end{align}
\end{subequations}
where $\rho$ is a positive coefficient and $d \in \mathbb{R}$ is a slack variable.
The resulting minimally invasive point-wise controller prioritizes safety over control objectives by satisfying (\ref{eq5}) as a hard constraint.

\subsection{Impact of uncertainty in the switching systems}\label{sec3}
In this section, we focus on reformulating CLF and CBF constraints such that they take the model uncertainty into account. We refer to the system (\ref{eq1}) as the true system, which is unknown and modeled by a single nominal model in all regions as
\begin{equation}
    \dot {\boldsymbol{x}} = \hat f(\boldsymbol{x})+ \hat g(\boldsymbol{x}) \boldsymbol{u},
    \label{eq7}
\end{equation}
where $\hat f:\mathbb{R}^n \rightarrow \mathbb{R}^n$ and $\hat g:\mathbb{R}^n \rightarrow \mathbb{R}^{n \times m}$ are locally Lipschitz in $\boldsymbol{x}$. We design a locally exponentially stabilizing CLF and CBF based on the nominal system and assume they are also valid for the true system. 

Due to the inherent model mismatch between the true system and the nominal model, the time derivatives of $V(\bm x)$ and $h(\bm x)$ based on the nominal model differ from their true values. The resulting error can be represented as
\begin{align*}
    d^V(\bm x, \bm u) &= \dot V(\bm x, \bm u) - \hat{\dot V}(\bm x, \bm u) = \sum^R_{r=1} \delta_r d_r^V(\bm x, \bm u),\\
    d^h(\bm x, \bm u) &= \dot h(\bm x, \bm u) - \hat{\dot h}(\bm x, \bm u) = \sum^R_{r=1} \delta_r d_r^h(\bm x, \bm u),
\end{align*}
where for $r \in \mathcal{I}$, we have
\begin{align}
    d_r^V &= L_{f_r}V(\bm x) - L_{\hat f} V(\bm x) + \left(L_{g_r}V(\bm x) - L_{\hat g} V(\bm x)\right) \bm u,\nonumber\\
    d_r^h &= L_{f_r}h(\bm x) - L_{\hat f} h(\bm x) + \left(L_{g_r}h(\bm x) - L_{\hat g} h(\bm x)\right) \bm u,\label{eq8}
\end{align}
where the dependence on $\bm x$ and $\bm u$ is removed for simplicity.
Consequently, these discrepancies propagate into the CLF (\ref{eq6b}) and CBF (\ref{eq6c}) constraints, potentially jeopardizing the stability and safety of the system. Using (\ref{eq8}), the constraints (\ref{eq6b}) and (\ref{eq6c}) become
\begin{align}
    &L_{\hat f} V(\bm x) + L_{\hat g} V(\bm x) \bm u + \sum^R_{r=1} \delta_r d_r^V + \lambda V(\bm x) \leq d,\label{eq8a}\\
    &L_{\hat f} h(\bm x) + L_{\hat g} h(\bm x) \bm u + \sum^R_{r=1} \delta_r d_r^h  + \alpha(h(\bm x)) \geq 0. \label{eq8b}
\end{align}
In this paper, our primary focus is characterizing the adverse effects of uncertainty in the proposed safety-critical control method. We then adopt a supervised learning approach to make our design robust to piecewise residuals of the form (\ref{eq8}). Thus, a dataset that contains information about piecewise residuals in all regions is needed.
\begin{assum}\label{assum2}
    We have access to the set of samples $\{ (\bm x(t), \bm u(t)) \}, t \in [t_0, t_f]$ collected with sampling time $\Delta t$ from trajectories of the system. We assume this set is rich and contains states from all regions $\mathcal{R}_r, r \in \mathcal{I}$.
\end{assum}
Given these data samples, we can approximately measure $d^V$ and $d^h$ by collecting trajectories from the true system and using the finite difference method to obtain
\begin{align} 
\omega_j^V &= \left( {V\left( {\bm x(t_j + \Delta t)} \right) - V\left( {\bm x(t_j)} \right)} \right)/\Delta t - \hat{\dot V}\left( {{\bm x_j},{\bm u_j}} \right), \nonumber\\ 
\omega_j^h &= \left( {h\left( {\bm x(t_j + \Delta t)} \right) - h\left( {\bm x(t_j)} \right)} \right)/\Delta t - \hat{\dot h}\left( {{\bm x_j},{\bm u_j}} \right),\label{eq9}
\end{align}
where $t_j = t_0 + (j-1)\Delta t$, $\bm x_j = \left (\bm x(t_j + \Delta t) + \bm x(t_j) \right)/2$ is the mean of the state, and $\bm u_j = \bm u(t_j)$ is the control input for $t_j \in [t_j, t_j + \Delta t), j \in \{1, \dots, N \}$.

A dataset can be generated by collecting trajectories of the true system. We denote it by $\mathcal{D} = \{ ((\bm x_j, \bm u_j), \omega_j^V, \omega_j^h) \}_{j=1}^N$, where $(\bm x_j, \bm u_j) \in \mathcal{X} \times \mathbb{R}^m$ is the input data and outputs $\omega_j^V$ and $\omega_j^h$ are obtained from (\ref{eq9}).
We will use this dataset to learn piecewise residuals.
% \section{PROPOSED BATCH MOGP-BASED DESIGN for PIECE-WISE RESIDUALS}
\section{Proposed batch mogp-based design for piecewise residuals}
% GP is a machine learning regression method that provides a distribution over functions to approximate an unknown nonlinear map. In addition to the prediction, GP regression provides probabilistic confidence intervals. We leverage this property in our data-driven safety-critical control method where uncertainty quantification is crucial.
\subsection{Single-output GP for real-valued functions}\label{sec3a}
Typical single-output GP (SOGP) provides a framework to approximate nonlinear functions. We represent an SOGP approximation of $v:\mathcal{X} \rightarrow \mathbb{R}$ as $v(\bm x) \sim \mathcal{GP}\left(m(\bm x), k(\bm x, \bm x') \right)$, which is fully specified by its mean function $m:\mathcal{X} \rightarrow \mathbb{R}$ and covariance (kernel) function $k:\mathcal{X} \times \mathcal{X} \rightarrow \mathbb{R}$. Without loss of generality, we consider zero prior mean. Our prior knowledge of the problem can shape the form of $k(\bm x, \bm x')$, considering that the covariance function must satisfy positive definiteness \cite{williams2006gaussian}. Given noisy measurements $w_j = v(\bm x_j) +  \varepsilon_j, j\in \{1, \dots, N \}$, which are corrupted by Gaussian noise $\varepsilon_j \sim \mathcal{N}(0,\,\sigma^{2})$, a GP model can infer a posterior mean and variance for a test point $\bm x_t$ conditioned on the measurements
\begin{align} 
    \mu(\bm x_t) &= {{\bm{w}}^T}{\left( {K + \sigma _n^2 I} \right)^{-1}}\bar {K}^T,\nonumber \\ 
    \sigma(\bm x_t)^2 &= k\left( {{\bm x_t},{\bm x_t}} \right) - {\bar K}{\left( {K + \sigma _n^2 I} \right)^{- 1}}\bar{K}^T,
    \label{eq10}
\end{align}
where $\bm w \in \mathbb{R}^N$ is the vector of measurements $w_j$, $K\in \mathbb{R}^{N \times N}$ is the Gram matrix with elements $K_{ij} = k(\bm x_i, \bm x_j)$, and $\bar K = \begin{bmatrix}
    k(\bm x_t, \bm x_1), \dots, k(\bm x_t, \bm x_N)
\end{bmatrix}^T \in \mathbb{R}^N$.

\subsection{Batch multi-output GP for piecewise residuals}\label{sec3b}
Going back to the original problem, our objective is to build a Bayesian approximation uncorrelated target functions $d^V$ and $d^h$. 
% Note that these residuals are real-valued functions which are characterized by piecewise residuals (\ref{eq8}) for a given trajectory. 
We need to approximate each individual $d_r^V$ and $d_r^h$, for $r \in \mathcal{I}$. Since the input data $(\bm x_j, \bm u_j)$ gathered from the same trajectory, we leverage from batch multi-output structure to learn $d_r^V$ and $d_r^h$ simultaneously. As $d_r^V$ and $d_r^h$ are uncorrelated, we consider an independent MOGP framework. This design facilitates more computationally efficient GP training.
% We use isotopic training set, i.e., the training set for both target functions are the same as we collect the outputs from the same trajectories.
We initially partition the dataset $\mathcal{D}$, as described in Section \ref{sec2} into $R$ datasets of the form
\begin{equation*}
    \mathcal{D}_r = \{ ((\bm x_j, \bm u_j), \omega_j^V, \omega_j^h)| ((\bm x_j, \bm u_j), \omega_j^V, \omega_j^h) \in \mathcal{D}, \bm x_j \in \mathcal{R}_r  \}
\end{equation*}
for $r \in \mathcal{I}$, and denote the cardinality of each set by $N_r$. In a batch GP, each dataset or batch is associated with its own GP, and these GPs can capture the specific behavior and dependencies within each batch. Now, we employ MOGP to approximate $R$ individual GPs for $d_r^V$ and $d_r^h$ simultaneously within a unified framework. Let $\bm d_r(\bm x, \bm u) = \begin{bmatrix}d_r^V(\bm x, \bm u) & d_r^h(\bm x, \bm u) \end{bmatrix}^T \in \mathbb{R}^2$ be the vector-valued function which follows an MOGP as
\begin{equation}
\bm d_r(\bm x, \bm u) \sim  \mathcal{GP}^r\left ( \begin{bmatrix}0  \\ 0\end{bmatrix}, \begin{bmatrix}
k_r^V & 0 \\ 0 & k_r^h \end{bmatrix}\right ).\label{eq11}
\end{equation}
where the dependence to $(\bm x, \bm u)$ for the covariance functions is removed for simpler notation. As a result, training a MOGP model (\ref{eq11}) leading to separate posterior distributions for target functions $d_r^V$ and $d_r^h$:
\begin{align}
    d_r^V(\bm x, \bm u)  &\sim \mathcal{N}(m_r^V(\mathcal{D}_r, \bm x, \bm u) , {\sigma_r^V}^2(\mathcal{D}_r, \bm x, \bm u)),\nonumber\\
    d_r^h(\bm x, \bm u) &\sim \mathcal{N}(m_r^h(\mathcal{D}_r, \bm x, \bm u), {\sigma_r^h}^2(\mathcal{D}_r, \bm x, \bm u)).\label{eq12}
\end{align}
From (\ref{eq8}), we can verify that the residuals are control-affine and can be characterized by
\begin{align}
    d_r^V &= \begin{bmatrix}
        L_{f_r}V - L_{\hat f} V & L_{g_r}V - L_{\hat g} V
    \end{bmatrix} \begin{bmatrix}
        1\\ \bm u
    \end{bmatrix} = \bm \varphi_r^V \bm y,\label{eq13a}\\
    d_r^h &= \begin{bmatrix}
        L_{f_r}h - L_{\hat f} h & L_{g_r}h - L_{\hat g} h
    \end{bmatrix} \begin{bmatrix}
        1\\ \bm u
    \end{bmatrix}= \bm \varphi_r^h \bm y.\label{eq13b}
\end{align}
We denote the concatenation of one and $\bm u$ by $\bm y = \begin{bmatrix} 1, \bm u^T \end{bmatrix}^T \in \mathbb{R}^{m+1}$ and $\bm \varphi_r^V,\bm \varphi_r^h \in \mathbb{R}^{1 \times m+1}$. This prior knowledge of the problem can be embedded into the structure of the kernel function.

\begin{prop}\label{prop2}
     Let $\bm x \in \mathcal{R}_r$, $\bm y \in \mathbb{R}^{m+1}$, and define the input domain $\mathcal{\bar X}_r = \mathcal{R}_r \times \mathbb{R}^{m+1}$ for a given $r \in \mathcal{I}$. Then, consider the real-valued function $k_r: \mathcal{\bar X}_r \times \mathcal{\bar X}_r \rightarrow \mathbb{R}$, defined by
    \begin{equation}
        k_r \left(\begin{bmatrix}
        \bm x\\\bm y
    \end{bmatrix}, \begin{bmatrix}
        \bm x'\\ \bm y'
    \end{bmatrix} \right ) = \bm y^T \Lambda_r(\bm x, \bm x') \bm y', 
    \label{eq13}
    \end{equation}
    where $\Lambda_r(\bm x, \bm x') = \diag(\begin{bmatrix}
        k_r^1(\bm x,\bm x'), \dots, k_r^{m+1}(\bm x,\bm x')\end{bmatrix} )$ and $k_r^i: \mathcal{R}_r \times \mathcal{R}_r \rightarrow \mathbb{R}, i\in\{1, \dots, m+1 \}$, we denote the individual components $k_r^i$'s as base kernels. If base kernels are all positive-definite, then $k_r$ is a positive-definite kernel.
\end{prop}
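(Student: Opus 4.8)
The plan is to verify positive-definiteness directly from its definition, which (consistent with the notion of positive-definiteness used earlier in the paper for valid covariance functions) means that for any finite collection of test inputs $\{(\bm x_\ell, \bm y_\ell)\}_{\ell=1}^M \subset \mathcal{\bar X}_r$ and any real coefficients $c_1, \dots, c_M$, the quadratic form $Q = \sum_{i,j} c_i c_j\, k_r\bigl((\bm x_i, \bm y_i), (\bm x_j, \bm y_j)\bigr)$ is nonnegative. This is exactly the condition that the associated Gram matrix be positive semi-definite, and hence that $k_r$ be an admissible kernel.

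First I would substitute the defining expression (\ref{eq13}) into $Q$ and expand the diagonal matrix $\Lambda_r$ componentwise. Writing $(\bm y_\ell)_p$ for the $p$-th entry of $\bm y_\ell$, the bilinear form $\bm y_i^T \Lambda_r(\bm x_i, \bm x_j) \bm y_j$ equals $\sum_{p=1}^{m+1} (\bm y_i)_p\, k_r^p(\bm x_i, \bm x_j)\, (\bm y_j)_p$. Substituting this and interchanging the finite sums yields $Q = \sum_{p=1}^{m+1} \sum_{i,j} \bigl[c_i (\bm y_i)_p\bigr]\, k_r^p(\bm x_i, \bm x_j)\, \bigl[c_j (\bm y_j)_p\bigr]$.

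The key step is to recognize that the scalar weights and the $\bm y$-components can be folded into new coefficients: for each fixed $p$, set $a_\ell^{(p)} = c_\ell (\bm y_\ell)_p \in \mathbb{R}$. Then the inner double sum becomes precisely $\sum_{i,j} a_i^{(p)} k_r^p(\bm x_i, \bm x_j) a_j^{(p)}$, which is the quadratic form of the base kernel $k_r^p$ evaluated at the points $\bm x_1, \dots, \bm x_M \in \mathcal{R}_r$. Since each $k_r^p$ is assumed positive-definite, its Gram matrix is positive semi-definite, so this inner sum is nonnegative for every $p$. Hence $Q$ is a sum of $m+1$ nonnegative terms and is itself nonnegative, which establishes the claim.

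I do not expect a genuine obstacle here, as the argument is essentially bookkeeping; the only point requiring care is the reindexing that absorbs the component $(\bm y_i)_p$ into the coefficient $a_i^{(p)}$, since this is what decouples the block structure into $m+1$ independent base-kernel forms. As an alternative one could argue purely through kernel algebra: the map $(\bm x, \bm y) \mapsto (\bm y)_p$ induces the rank-one positive-semidefinite kernel $(\bm y)_p (\bm y')_p$, its product with $k_r^p$ is positive-semidefinite by the Schur product theorem, and $k_r$ is the finite sum of these products. I would nonetheless keep the direct computation as the primary proof, since it is self-contained and avoids invoking the product theorem.
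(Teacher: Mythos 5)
Your proof is correct, and it takes a genuinely different route from the paper's. The paper argues through feature maps: it invokes the characterization that each positive-definite base kernel $k_r^p$ admits a feature map $\varphi_p$ with $k_r^p(\bm x,\bm x') = \varphi_p^T(\bm x)\varphi_p(\bm x')$, stacks the blocks $y^p \varphi_p(\bm x)$ into a combined map $\psi(\bm x,\bm y)$, and observes that $k_r = \psi^T(\bm x,\bm y)\,\psi(\bm x',\bm y')$, which certifies positive-definiteness. Your argument performs the same absorption of the component $(\bm y)_p$, but on the coefficient side rather than the feature side: you fold $c_\ell (\bm y_\ell)_p$ into new weights $a_\ell^{(p)}$ and thereby reduce the Gram quadratic form of $k_r$ to a sum of $m+1$ Gram quadratic forms of the base kernels. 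The two proofs are dual in this sense, but yours is more elementary and self-contained: it uses only the definition of positive-definiteness, whereas the paper's proof relies on the existence of a feature map for every positive-definite kernel (a nontrivial fact, essentially Moore--Aronszajn/Mercer) and tacitly writes possibly infinite-dimensional feature maps as finite column vectors. What the paper's construction buys in exchange is an explicit feature map, which connects directly to its subsequent remark that $k_r$ is the reproducing kernel of the RKHS $\mathcal{H}_{k_r}(\mathcal{\bar X})$. One small point common to both proofs: a kernel must also be symmetric, and in your setting this follows immediately from symmetry of the base kernels, so it is worth one sentence. Your alternative via the Schur product theorem (product of $k_r^p$ with the rank-one kernel $(\bm y)_p(\bm y')_p$, summed over $p$) is likewise valid and constitutes a third standard route.
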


\begin{proof}
    Since $k_r^i(\bm x, \bm x')$'s are positive definite kernels, then by the definition, there is a feature map $\varphi(\bm x)$ such that $k_r^i(\bm x, \bm x') = \varphi^T_i(\bm x) \varphi_i(\bm x')$ for $i\in \{1, \dots, m+1 \}$ \cite{bishop2006pattern}. Now, Let $y^i$ and $y'^i$ be the $i^{th}$ element of the corresponding vectors. We have
    \begin{align*}
        k_r &= \bm y^T \diag \left([
        \varphi^T_1(\bm x) \varphi_1(\bm x'), \dots, \varphi^T_{m+1}(\bm x) \varphi_{m+1}(\bm x')
    ] \right) \bm y'\\
    &= \sum_{i=1}^{m+1} y^i \varphi_i^T(\bm x) \varphi_i(\bm x') y'^i\\
    &= \begin{bmatrix}
         y^1 \varphi_1^T(\bm x) & \dots & y^{m+1} \varphi_{m+1}^T(\bm x)
    \end{bmatrix}
    \begin{bmatrix}
        y'^1 \varphi_1(\bm x') \\ \vdots \\ y'^{m+1} \varphi_{m+1}(\bm x')
    \end{bmatrix}\\
        &= \psi^T\left(\begin{bmatrix}
        \bm x\\\bm y
    \end{bmatrix} \right ) 
    \psi\left(\begin{bmatrix}
        \bm x'\\ \bm y'
    \end{bmatrix} \right ),
    \end{align*}
    which again by the definition of kernels, proves that $k_r(\cdot, \cdot)$ is a positive-definite kernel.
\end{proof}
An immediate consequence of Proposition \ref{prop2} is that $k_r$ is the reproducing kernel of a reproducing kernel Hilbert space (RKHS) $\mathcal{H}_{k_r}(\mathcal{\bar X})$. Also, this structure allows us to choose $k_r^i$'s specifically for each region $r \in \mathcal{I}$. 

We denote the input samples and domain by $(\bm x, \bm y) \in \mathcal{\bar X}$ as defined above. Let $X_r \in \mathbb{R}^{n \times N_r}$ and $Y_r \in \mathbb{R}^{(m+1) \times N_r}$ be matrices whose columns are vectors $\bm x_j$ and $\bm y_j$ of the corresponding dataset $\mathcal{D}_r$ and let outputs $\bm \omega_r \in \mathbb{R}^{N_r}$ be the stacked vector of the corresponding outputs to the states $\bm x_j \in \mathcal{R}_r$. Based on the collected batches, the MOGP model (\ref{eq11}) equipped with the kernels of the form $k_r$ in Proposition \ref{prop2}, gives the following expression for the posterior distribution of a test point $(\bm x_*, \bm y_*) \in \mathcal{\bar X}_r$:
\begin{align}
    m_r^* &= \bm \omega_r^T (k_r + \sigma_n^2 I )^{-1}\bar{K}_r^T\bm y_*,\label{eq14}\\
    {\sigma_r^*}^2&= \bm y_*^T ( \Lambda_r(\bm x_*,\bm x_*) - \bar{K}_r(K_r + \sigma_n^2 I)^{-1}\bar{K}_r^T) \bm y_*,\label{eq15}
\end{align}
where $K_r$ is the Gram matrix of $k_r(\cdot, \cdot)$ for the input data pair $(X_r,Y_r)$, and $\bar K_r \in \mathbb{R}^{(m+1)\times N_r}$ is given by
\begin{align*}
    \bar K_r &= \begin{bmatrix}
        \bm{\bar{k}}^1_r & \bm{\bar{k}}_r^2 & \dots &\bm{\bar{k}}_r^{N_r}
    \end{bmatrix} 
    \circ Y_r,\\
    \bm{\bar{k}}_r^i &= \begin{bmatrix}
        k_r^1(x_*, x_i)& \dots& k_r^{m+1}(x_*, x_i)
    \end{bmatrix}^T, i \in \{1, \dots, N_r \},
\end{align*}
where $\circ$ denotes the element-wise multiplication (Hadamard product) of two matrices with identical dimensions. The posterior mean and variance of $d_r^V$ and $d_r^V$ are obtained by substituting the output vectors $\bm \omega_r^V$ and $\bm \omega_r^h$ and their corresponding kernel structure into (\ref{eq14}) and (\ref{eq15}). 

This result highlights a key advantage of using the proposed kernel. The resulting expression for the posterior mean $m_r^* = \bm \mu_r(\bm x_*)\bm y_*$ and the variance ${\sigma_r^*}^2 =\bm y_*^T \Sigma_r(\bm x_*) \bm y_*$ are linear and quadratic in $\bm y_*$ (and the control input), respectively. We will exploit this feature in the next section to establish a convex GP-based safety filter.

\subsection{Confidence bounds for the estimation of the effect of uncertainty}

Although GP is inherently a probabilistic model, a high probability error bound can be derived for the distance between the true value and the GP prediction. This requires an additional assumption on $d_r^V$ and $d_r^h$. Let $d_r$ be either $d_r^V$ or $d_r^h$, and $\bm \varphi_r$ be either $\bm \varphi_r^V$ or $\bm \varphi_r^h$ from (\ref{eq13a}) and (\ref{eq13b}).

\begin{assum}\label{assum3}
    We assume that each $i^{th}$ element of $\bm \varphi_r$ is a member of $\mathcal{H}_{k_r^i}$ for $i \in \{1, \dots, m+1 \}$ with bounded RKHS norm $\| \varphi_r^i \|_{k_r^i} \leq \eta_r$ for $i \in \{1, \dots, m+1 \}$ within a given region $r \in \mathcal{I}$.
\end{assum}

\begin{lem}[\cite{srinivas2009gaussian}]
 Let Assumption \ref{assum3} hold. Then, with a probability of at least $1-\delta$, the following holds
\begin{equation} 
    \vert m_r^*-d_r(\bm x, \bm u)\vert \leq\beta_r\sigma_r^*, 
    \label{eq16}
\end{equation}
on $\mathcal{R}_r \subset \mathcal{X}, \delta \in (0,1)$ and $$\beta_r=\sqrt{2\eta_r^2+300 \kappa_{N_r+1} \log ^3((N_r+1) / \delta)},$$ where $\kappa_{N_r+1}$ is the maximum mutual information that can be obtained after getting $N_r+1$ data, and $\eta_r$ is the upper bound of the corresponding RKHS norm, and $m^*_r$ and $\sigma^*_r$ are the posterior mean and standard deviation of a test point $(\bm x_*, \bm y_*) \in \mathcal{\bar X}_r$.
\end{lem}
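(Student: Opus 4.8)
The plan is to reduce the statement to the standard RKHS confidence bound for single-output GP regression established in \cite{srinivas2009gaussian}, applied to the scalar-valued kernel $k_r$ on the joint input domain $\mathcal{\bar X}_r$. The essential observation is that, although (\ref{eq11}) is written as a multi-output model, Proposition \ref{prop2} shows that $k_r$ is itself a bona fide positive-definite kernel on $\mathcal{\bar X}_r$, and the posterior expressions (\ref{eq14})--(\ref{eq15}) are precisely the standard GP posterior mean $m_r^*$ and variance ${\sigma_r^*}^2$ for this scalar kernel evaluated at the test input $(\bm x_*, \bm y_*)$. Hence the whole task collapses to verifying that the real-valued target $d_r(\bm x, \bm u) = \bm \varphi_r \bm y$ is a member of the RKHS $\mathcal{H}_{k_r}(\mathcal{\bar X}_r)$ with a quantifiable norm bound, after which the information-theoretic concentration inequality can be invoked verbatim.

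First I would establish RKHS membership of the target. Writing $d_r = \sum_{i=1}^{m+1} \varphi_r^i(\bm x)\, y^i$ and reusing the feature-map factorization $k_r^i(\bm x, \bm x') = \varphi_i^T(\bm x)\varphi_i(\bm x')$ together with the joint feature map $\psi$ constructed in the proof of Proposition \ref{prop2}, one sees that $d_r$ is exactly the element of $\mathcal{H}_{k_r}$ whose $i$th block equals $\varphi_r^i$. A direct computation then gives $\| d_r \|_{k_r}^2 = \sum_{i=1}^{m+1} \| \varphi_r^i \|_{k_r^i}^2$, which under Assumption \ref{assum3} is finite and controlled by $\eta_r$. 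This is the step that converts the per-component regularity hypotheses on the entries of $\bm\varphi_r$ into a single scalar norm bound $B_r$ on $d_r$, the quantity that feeds into $\beta_r$.

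Next I would apply the concentration result of \cite{srinivas2009gaussian}: for any $f$ lying in the RKHS of a bounded kernel with $\|f\|_{k_r} \leq B_r$ and observed through Gaussian noise, the event $|m_r^* - f| \leq \beta_r \sigma_r^*$ holds simultaneously over $\mathcal{R}_r$ with probability at least $1-\delta$, where $\beta_r$ combines the norm bound $B_r$ with the maximum mutual information $\kappa_{N_r+1}$ accumulated after $N_r+1$ observations. Specializing to $f = d_r$ and inserting the norm bound from the previous step yields (\ref{eq16}). The underlying martingale/sub-Gaussian tail argument and the definition of $\kappa_{N_r+1}$ as the maximal mutual information of the kernel $k_r$ are inherited directly from the reference and need not be re-derived.

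I expect the main obstacle to be the bookkeeping of constants in passing from the multi-output regularity assumption to the scalar term in $\beta_r$. Since $\| d_r \|_{k_r}^2 = \sum_i \| \varphi_r^i \|_{k_r^i}^2$, the per-component bound $\|\varphi_r^i\|_{k_r^i}\le\eta_r$ naturally produces an aggregate norm scaling with $m+1$, and some care is required to reconcile this with the $2\eta_r^2$ term in the stated $\beta_r$ (e.g.\ by reading $\eta_r$ as a bound on the aggregate norm $\|d_r\|_{k_r}$ rather than on each coordinate, or by absorbing the $m+1$ factor). A secondary point is to confirm that $\kappa_{N_r+1}$ is the information gain of the product kernel $k_r$ on $\mathcal{\bar X}_r$, so that the entire test-point dependence enters only through the quadratic form $\sigma_r^* = \sqrt{\bm y_*^T \Sigma_r(\bm x_*) \bm y_*}$ appearing in (\ref{eq15}), consistent with the convexity structure exploited later.
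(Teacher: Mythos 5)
Your proposal is correct, and it fills in exactly what the paper leaves implicit: the paper offers no proof of this lemma at all, simply importing it from \cite{srinivas2009gaussian}, so the real content is precisely the reduction you carry out --- observing that Proposition \ref{prop2} makes $k_r$ a scalar positive-definite kernel on $\mathcal{\bar X}_r$, that (\ref{eq14})--(\ref{eq15}) are then the ordinary single-output GP posterior for that kernel, and that $d_r = \bm\varphi_r \bm y$ lies in $\mathcal{H}_{k_r}$ via the block feature map $\psi$, after which the standard RKHS concentration theorem applies verbatim. Your flagged bookkeeping issue is also a genuine (if minor) inconsistency in the paper's statement: Assumption \ref{assum3} bounds each component by $\eta_r$, which via $\| d_r \|_{k_r}^2 = \sum_{i=1}^{m+1} \| \varphi_r^i \|_{k_r^i}^2$ yields an aggregate bound $\sqrt{m+1}\,\eta_r$, so the term in $\beta_r$ should read $2(m+1)\eta_r^2$ unless $\eta_r$ is reinterpreted as a bound on $\|d_r\|_{k_r}$ itself; the paper never reconciles this, and your proposed reading is the charitable fix.
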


Based on the probabilistic bounds on (\ref{eq16}) and the CLF and CBF constraints (\ref{eq8a}) and (\ref{eq8b}), we can conclude that the following hold with a probability of at least $1 - \delta$
\begin{align}
    \dot V(\bm x, \bm u) &\leq \hat{\dot V}(\bm x, \bm u) + \sum^R_{r=1} \delta_r(m_r^V + \beta_r \sigma_r^V),\label{eq17}\\
    \dot h(\bm x, \bm u) &\geq \hat{\dot h}(\bm x, \bm u) +  \sum^R_{r=1} \delta_r( m_r^h - \beta_r \sigma_r^h),
    \label{eq18}
\end{align}
where $m_r^V(\bm x, \bm y)$ and $\sigma_r^V(\bm x, \bm y)$ are the mean and standard deviation of a query point derived from (\ref{eq14}), (\ref{eq15}). Incorporating (\ref{eq17}) and (\ref{eq18}) into the optimization problem (\ref{eq6}), we have
\begin{subequations}\label{eq19}
    \begin{align}
        \boldsymbol{u}_{s} ={} &\underset{\left( \bm u , d \right) \in \mathbb{R}^{m+1}}{\arg\min} \hspace{4pt} {\| \bm u \|^2_2 + \rho d^2}\label{eq19a}\\
        % &\textrm{s.t.}\nonumber\\
        \textrm{s.t.}\quad&\hat{\dot V}(\bm x, \bm u) + \sum^R_{r=1} \delta_r(m_r^V + \beta_r \sigma_r^V) + \lambda V(\bm x) \leq d,\label{eq19b}\\
        &\hat{\dot h}(\bm x, \bm u) +  \sum^R_{r=1} \delta_r( m_r^h - \beta_r \sigma_r^h)  + \alpha(h(\bm x)) \geq 0, \label{eq19c}
    \end{align}
\end{subequations}
Note that the constraint in (\ref{eq19}) is constructed regardless of the underlying true dynamics.
\begin{thm}\label{thm1}
    The optimization problem (\ref{eq19}) is convex and can be converted into the standard SOCP of the form (\ref{eq20}), if $m_r^V, m_r^h$, and $\sigma_r^V, \sigma_r^h$ satisfy (\ref{eq14}) and (\ref{eq15}), respectively.
    \begin{align}
       \bm u_{socp} ={} &\underset{\bm z}{\arg\min} \hspace{4pt} \bm f^T \bm z \nonumber\\
        \text { s.t. } & \sum^R_{r=1} \delta_r \left\|M_r^i \bm z+\bm{n}_r^i\right\|_2 \leq \sum^R_{r=1} \delta_r ({\bm{p}_r^i}^T \bm z+q_r^i),
        \label{eq20}
    \end{align}
    for $i=1, \ldots, n_c$, where $\bm z \in \mathbb{R}^{n_z}$ and $\bm f \in \mathbb{R}^{n_z}$, and the matrix $M_r^i$ and $\bm{n}_r^i, \bm{p}_r^i, q_r^i$ have appropriate dimensions for $n_c$ constraints.
\end{thm}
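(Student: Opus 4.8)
The plan is to exploit the two structural facts established immediately before the theorem: the posterior mean $m_r^* = \bm{\mu}_r(\bm{x}_*)\bm{y}_*$ is \emph{affine} in the control input through $\bm{y}_* = [1,\,\bm{u}^T]^T$, whereas the posterior variance ${\sigma_r^*}^2 = \bm{y}_*^T \Sigma_r(\bm{x}_*)\,\bm{y}_*$, with $\Sigma_r(\bm{x}_*) = \Lambda_r(\bm{x}_*,\bm{x}_*) - \bar{K}_r(K_r+\sigma_n^2 I)^{-1}\bar{K}_r^T$ from (\ref{eq15}), is a \emph{quadratic form} in $\bm{y}_*$. The first step is to note that $\Sigma_r(\bm{x}_*)$ is positive semidefinite: it is the Schur complement of the positive-definite block $K_r+\sigma_n^2 I$ in the joint Gram matrix of the training and test inputs, equivalently the conditional covariance of a jointly Gaussian vector, so $\bm{y}_*^T \Sigma_r(\bm{x}_*)\bm{y}_* \ge 0$ for every $\bm{y}_*$. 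Consequently its symmetric square root exists and the standard deviation is \emph{exactly} a Euclidean norm, $\sigma_r^* = \big\| \Sigma_r^{1/2}(\bm{x}_*)\,\bm{y}_* \big\|_2$, which is precisely the convex norm-type term an SOCP constraint admits.

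Next I would convexify the objective. Fixing the query state $\bm{x}$ makes the indicators $\delta_r\in\{0,1\}$ known constants with $\sum_{r=1}^R \delta_r = 1$ (Assumption \ref{assum1}), so the only decision variables are $\bm{u}$ and the slack $d$, and $\|\bm{u}\|_2^2 + \rho d^2$ is a convex quadratic. Introducing an epigraph variable $t$ and minimizing $t$ subject to $\|\bm{u}\|_2^2 + \rho d^2 \le t$, I rewrite the latter as the rotated-cone-equivalent standard second-order cone constraint
\[
\left\| \begin{bmatrix} 2\bm{u} \\ 2\sqrt{\rho}\, d \\ t-1 \end{bmatrix}\right\|_2 \le t+1 .
\]
Stacking variables as $\bm{z} = [\bm{u}^T,\, d,\, t]^T \in \mathbb{R}^{n_z}$ with $n_z = m+2$, the objective becomes the linear form $\bm{f}^T\bm{z}$ with $\bm{f} = [0,\dots,0,1]^T$, matching (\ref{eq20}).

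The third step recasts the two chance constraints. Substituting $\hat{\dot V} = L_{\hat f}V(\bm{x}) + L_{\hat g}V(\bm{x})\bm{u}$, the affine mean $m_r^V = \bm{\mu}_r^V(\bm{x})\bm{y}$, and $\sigma_r^V = \|\Sigma_r^{V\,1/2}(\bm{x})\bm{y}\|_2$ into (\ref{eq19b}) and isolating the norm terms, the CLF inequality becomes
\[
\sum_{r=1}^R \delta_r\,\beta_r \big\| \Sigma_r^{V\,1/2}(\bm{x})\,\bm{y}\big\|_2 \le d - L_{\hat f}V(\bm{x}) - L_{\hat g}V(\bm{x})\bm{u} - \sum_{r=1}^R \delta_r\,\bm{\mu}_r^V(\bm{x})\bm{y} - \lambda V(\bm{x}).
\]
Because $\bm{y} = [1,\,\bm{u}^T]^T$ is affine in $\bm{z}$, each left-hand norm is of the form $\|M_r^i\bm{z}+\bm{n}_r^i\|_2$ (with $M_r^i$ absorbing $\beta_r \Sigma_r^{V\,1/2}$) and the right-hand side is affine, $\sum_r \delta_r({\bm{p}_r^i}^T\bm{z}+q_r^i)$, giving exactly the conic inequality in (\ref{eq20}). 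The CBF constraint (\ref{eq19c}) is treated identically; since the uncertainty enters as $-\beta_r\sigma_r^h$, isolating it again moves the norm to the left-hand side. Counting the three resulting conic inequalities (objective epigraph, CLF, CBF) yields $n_c = 3$, and reading off $M_r^i,\bm{n}_r^i,\bm{p}_r^i,q_r^i$ completes the identification with (\ref{eq20}); convexity is then immediate, as (\ref{eq20}) minimizes a linear functional over an intersection of second-order cones.

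I expect the only genuine obstacle to be the positive-semidefiniteness of $\Sigma_r(\bm{x}_*)$, which is what legitimizes replacing the standard deviation by a Euclidean norm and hence the entire SOCP reformulation; everything else is bookkeeping of coefficients. A secondary point worth checking is that the switching sums do not destroy convexity: since the $\delta_r$ are fixed constants at the evaluated state, $\sum_r \delta_r(\cdot)$ preserves affineness on the right and a nonnegatively weighted sum of norms on the left, so no nonconvexity is introduced by the switching structure.
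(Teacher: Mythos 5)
Your proposal is correct and follows essentially the same route as the paper's proof: exploit the kernel structure so that the posterior mean is affine and the standard deviation is a Euclidean norm in $\bm y = [1,\,\bm u^T]^T$, introduce an epigraph variable $t$ for the quadratic objective, and read off the three second-order cone constraints (objective, CLF, CBF) with the active region's $\delta_r$ fixed. The only cosmetic differences are in the objective reformulation---you bound the squared objective by a rotated-cone constraint $\bigl\|[2\bm u^T,\,2\sqrt{\rho}\,d,\,t-1]^T\bigr\|_2 \le t+1$, whereas the paper first passes to the equivalent norm objective $\|E\bm p\|_2$ and bounds it by $t$ directly---and your Schur-complement/conditional-covariance justification that $\Sigma_r$ is positive semidefinite is in fact more careful than the paper's one-line claim that $\Sigma_r^h$ is positive definite because $k_r$ is a valid kernel and $\sigma_n>0$.
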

\begin{proof}
    Let's denote the objective function of (\ref{eq19}) by $J=\left\|\bm u\right\|_2^2 + \rho d^2$. Let $\bm p = \begin{bmatrix}\bm u^T & d\end{bmatrix}^T \in \mathbb{R}^{m+1}$. Then, we can rewrite $J$ as the equivalent form of $J_1=\bm p^T Q \bm p$, where $Q$ is a positive definite diagonal matrix with matrix square root $E$. Then, we convert $J_1$ into $J_2 = \left\|E \bm p\right\|^2_2$. Since the Euclidean norm is always positive, we can consider the equivalent objective function $J_3=\left\|E \bm p\right\|_2$ and set $\left\|E \bm p \right\|_2 \leq t$, where $t \in \mathbb{R}$ is an auxiliary variable. Then, we convert $J_3$ to a second-order cone constraint (\ref{eq21}). Now, we need to solve a new minimization problem with a new augmented variable $\bm z = \begin{bmatrix}\bm u^T & d & t\end{bmatrix}^T\in \mathbb{R}^{m+2}$ as
    \begin{align}
        \min & \begin{bmatrix}\bm 0_{m+1}^T &1 \end{bmatrix} \bm z \nonumber\\
        \text { s.t. } & \| \underbrace{\begin{bmatrix}E & \bm 0_{m+1}\end{bmatrix}}_{\coloneqq M_r^1} \bm z\| \leqslant \underbrace{\begin{bmatrix} \bm 0_{m+1}^T & 1\end{bmatrix}}_{\coloneqq \bm {p_r^1}^T} \bm z
        \label{eq21},
    \end{align}
    where $\bm 0_{m+1} \in \mathbb{R}^{m+1}$ is a vector of zeros and $\bm f=  \begin{bmatrix}\bm 0_{m+1}^T &1 \end{bmatrix}^T$.

    Next, we need to show that constraints (\ref{eq19b}) and (\ref{eq19c}) are SOC constraints. We need to consider the constraints for a given region index $r \in \mathcal{I}$ as at each time step one region is active according to the defined switching signal $\delta_r$ in Section \ref{sec1}. As the proof is the same for both constraints, we consider the CBF constraint (\ref{eq19c}). Note that based on (\ref{eq14}), we have $m^h_r(\bm x, \bm y) = \bm \mu_r^h(\bm x) \bm y$, which can be written as
    \begin{align*}
        m^h_r(\bm x, \bm y) = \mu^{h1}_r + \bm \mu_r^{hm} \bm u,
    \end{align*}
    where the notation $\mu^{h1}_r$ refers the first element and $\bm \mu_r^{hm}$ refers the last $m$ elements of the row vector $\bm \mu^h_r \in \mathbb{R}^{1 \times (m+1)}$. Also, from (\ref{eq13b}), we know that $\bm \varphi_r^h$ is control affine, i.e. $\bm \varphi_r^h = \varphi_r^{hf} + \bm \varphi_r^{hg} \bm u$, where $\varphi_r^{hf} \in \mathbb{R}$ and the row vectors $\bm \varphi_r^{hg} \in \mathbb{R}^{1 \times m}$. Hence, the right hand side of the inequality in (\ref{eq19c}) is affine in $\bm u$ as desired.

    Based on (\ref{eq15}), we have that ${\sigma_r^h}^2(\bm x, \bm y) = \bm y^T \Sigma_r^h (\bm x) \bm y$, where $\Sigma_r^h \in \mathbb{R}^{(m+1) \times (m+1)}$. Since $k_r$ is a valid kernel and $\sigma_n>0$, $\Sigma_r^h$ is positive definite. Then, we have
    \begin{equation*}
         \sigma_r^h(\bm x, \bm y)=\sqrt{\bm y^T \Sigma_r^h \bm y}=\sqrt{\bm y^T L_r^T L_r \bm y}=\|L_r \bm y\|_2,
    \end{equation*}
    where $L_r \in \mathbb{R}^{(m+1) \times (m+1)}$ is the matrix square root of $\Sigma_r^h$. By the definition $\bm y = \begin{bmatrix}1 & \bm u^T\end{bmatrix}^T$, we have $\sigma_r^h(\bm x, \bm y) = \left\|\bm l_r^1+ L_r^m \bm u\right\|_2$, where the notation $\bm l_r^1$, and $L_r^m$ refers to the first column and the last $m$ columns of the matrix $L_r$, respectively. Now, we can rewrite the safety certificate (\ref{eq19c}) as a SOC constraint of the form
    \begin{equation}
        \left\|A_r^h(\bm x) \bm u+ \bm{b}_r^h(\bm x)\right\|_2 \leq \bm{c}_r^h(\bm x) \bm u+d_r^h(\bm x), 
        \label{eq22}
    \end{equation}
    where 
    \begin{align}
        A_r^h(\bm x) &= \beta_r L_r^m \in \mathbb{R}^{(m+1) \times (m)},\nonumber\\
        \bm b_r^h(\bm x) &= \beta_r \bm l_r^1 \in \mathbb{R}^{m+1}, \nonumber\\
        \bm c_r^h(\bm x) &= \bm \varphi_r^{hg} + \bm \mu_r^{hm} \in \mathbb{R}^{1 \times m}\nonumber\\
        d_r^h(\bm x) &= \varphi_r^{hf} + \mu_r^{h1} \in \mathbb{R}.\label{eq26p}
    \end{align}
    % $A_r^h(\bm x) = \beta_r L_r^m \in \mathbb{R}^{(m+1) \times (m+1)}$, $\bm b_r^h(\bm x) = \beta_r \bm l_r^1 \in \mathbb{R}^{m+1}$, and $\bm c_r^h(\bm x) = \bm \varphi_r^{hg} + \bm \mu_r^{hm} \in \mathbb{R}^{1 \times (m+1)}$, $d_r^h(\bm x) = \varphi_r^{hf} + \mu_r^{h1} \in \mathbb{R}$.
    
    It can be expressed in terms of the new variable $\bm z$ as
    \begin{equation*}
        \| \underbrace{\begin{bmatrix}
        A_r^h(\bm x) & \bm{0}_{m+1}
        \end{bmatrix}}_{\coloneqq M_r^2} \bm z+\underbrace{\bm{b}_r^h(\bm x)}_{\coloneqq \bm n_r^2} \|_2 \leq \underbrace{\begin{bmatrix}\bm{c}_r^h(\bm x) & 0 \end{bmatrix}}_{\coloneqq \bm {p_r^2}^T} \bm z+ \underbrace{d_r^h(\bm x)}_{\coloneqq q_r^2}.
    \end{equation*}
The same proof holds for the CLF constraint (\ref{eq19b}). Adding the resulting constraints to (\ref{eq21}) leads to (\ref{eq20}), which concludes the proof.
\end{proof}

\subsection{Feasibility analysis}\label{sec3d}
In this section, we will analyze the point-wise feasibility of the SOCP (\ref{eq20}). As the CLF is considered as a soft constraint, only the CBF constraint restricts the feasibility. Intuitively, if the batch MOGP prediction is less accurate, the resulting standard deviation $\sigma_r^h(\bm x, \bm y)$ will be large, leading to a more conservative approach. In particular, it restricts the space on which a safe control input can be selected, which may make the SOCP infeasible. In the following, we theoretically analyze the conditions for feasibility.
\begin{thm}\label{thm3}
    Given a state $\bm x \in \mathcal{R}_r$, the SOCP (\ref{eq20}) is feasible if and only if there exists a control input $\bm u \in \mathbb{R}^m$ that satisfies the following conditions
        % \begin{align}
        %        & \begin{bmatrix}
        %         \varphi_r^{hf} + \mu_r^{h1} & \bm{c}_r^h(\bm x)
        %         \end{bmatrix}
        %         \begin{bmatrix}
        %             1\\ \bm u
        %         \end{bmatrix}
        %         \geq 0,\label{eq23}\\
        %     &\begin{bmatrix}
        %         1 & \bm u^T
        %     \end{bmatrix} S_r(\bm x) \begin{bmatrix}
        %         1 \\ \bm u
        %     \end{bmatrix} \leq 0,
        %     \label{eq24}
        % \end{align}
        % where $S_r \in \mathbb{R}^{(m+1) \times (m+1)}$ is of the form
        % \begin{align}
        %     S_r(\bm x) &= \begin{bmatrix}
        %     S_r^1(\bm x) & S_r^2(\bm x) \\
        %     {S_r^2}^T(\bm x) & S_r^3(\bm x)
        %     \end{bmatrix},\nonumber\\
        %     S_r^1(\bm x) &= \beta_r^2 {\bm l_r^1}^T \bm l_r^1-(\varphi_r^{hf}+\mu_r^{h1})^T(\varphi_r^{hf}+\mu_r^{h1}),\nonumber\\
        %     S_r^2(\bm x) &= \beta_r {\bm l_r^1}^T A_r^h(\bm x)-(\varphi_r^{hf}+\mu_r^{h1})^T \bm c_r^h(\bm x),\nonumber\\
        %     S_r^3(\bm x) &= A_r^h(\bm x)^T A_r^h(\bm x)-\bm c_r^h(\bm x)^T \bm c_r^h(\bm x).
        %     \label{eq25}
        % \end{align}
        \begin{align}
               & \begin{bmatrix}
                d_r^h(\bm x) & \bm{c}_r^h(\bm x)
                \end{bmatrix}
                \begin{bmatrix}
                    1\\ \bm u
                \end{bmatrix}
                \geq 0,\label{eq23}\\
            &\begin{bmatrix}
                1 & \bm u^T
            \end{bmatrix} S_r(\bm x) \begin{bmatrix}
                1 \\ \bm u
            \end{bmatrix} \leq 0,
            \label{eq24}
        \end{align}
        where $S_r \in \mathbb{R}^{(m+1) \times (m+1)}$ is of the form
        \begin{align}
            S_r(\bm x) &= \begin{bmatrix}
            S_r^1(\bm x) & S_r^2(\bm x) \\
            {S_r^2}^T(\bm x) & S_r^3(\bm x)
            \end{bmatrix},\nonumber\\
            S_r^1(\bm x) &= {\bm{b}_{r}^h}^T(\bm x) {\bm{b}_{r}^h}(\bm x) - {d_r^h}^T(\bm x)d_r^h(\bm x),\nonumber\\
            S_r^2(\bm x) &= {\bm b_r^h}^T(\bm x) A_r^h(\bm x)-{d_r^h}^T(\bm x) \bm c_r^h(\bm x),\nonumber\\
            S_r^3(\bm x) &= A_r^h(\bm x)^T A_r^h(\bm x)-\bm c_r^h(\bm x)^T \bm c_r^h(\bm x),
            \label{eq25}
        \end{align}
        with $A_r^h(\bm x), \bm b_r^h(\bm x), \bm c_r^h(\bm x)$, and $d_r^h(\bm x)$ are defined in (\ref{eq26p}).
\end{thm}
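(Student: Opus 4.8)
The plan is to reduce the feasibility of the full SOCP (\ref{eq20}) to that of the single CBF second-order cone (SOC) constraint, and then to apply the elementary equivalence between an SOC constraint and a pair of scalar/quadratic inequalities.

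First I would argue that only the CBF constraint governs feasibility. At a state $\bm x \in \mathcal{R}_r$ the switching signal satisfies $\delta_r = 1$ and $\delta_j = 0$ for $j \neq r$ by Assumption \ref{assum1}, so every summation over $r$ in (\ref{eq20}) collapses to its single active term. The objective-cone constraint (\ref{eq21}) is satisfiable for any $(\bm u, d)$ by taking the auxiliary variable $t$ large enough, and the CLF constraint (\ref{eq19b}), once written in SOC form as in the final step of the proof of Theorem \ref{thm1}, carries the slack variable $d$ on its right-hand side and is therefore satisfiable for any $\bm u$ by taking $d$ large. Hence the SOCP (\ref{eq20}) is feasible if and only if the CBF SOC constraint (\ref{eq22}), namely $\|A_r^h \bm u + \bm b_r^h\|_2 \leq \bm c_r^h \bm u + d_r^h$, admits a solution $\bm u \in \mathbb{R}^m$.

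Next I would invoke the fact that the constraint $\|A_r^h \bm u + \bm b_r^h\|_2 \leq \bm c_r^h \bm u + d_r^h$ holds if and only if both (i) $\bm c_r^h \bm u + d_r^h \geq 0$ and (ii) $\|A_r^h \bm u + \bm b_r^h\|_2^2 \leq (\bm c_r^h \bm u + d_r^h)^2$. Condition (i) is precisely (\ref{eq23}). For condition (ii) I would expand both sides and collect terms as a quadratic form in the augmented vector $\begin{bmatrix} 1 & \bm u^T \end{bmatrix}^T$:
\begin{equation*}
\|A_r^h \bm u + \bm b_r^h\|_2^2 - (\bm c_r^h \bm u + d_r^h)^2 = \begin{bmatrix} 1 & \bm u^T \end{bmatrix} S_r(\bm x) \begin{bmatrix} 1 \\ \bm u \end{bmatrix},
\end{equation*}
and verify directly that the constant, cross, and quadratic blocks coincide with $S_r^1$, $S_r^2$, $S_r^3$ in (\ref{eq25}); this yields (\ref{eq24}). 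Combining the two directions gives the claimed equivalence between solvability of (\ref{eq22}) and joint solvability of (\ref{eq23})--(\ref{eq24}).

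The step that needs the most care is the role of the sign condition (\ref{eq23}). The squared inequality (\ref{eq24}) on its own is only equivalent to $\|A_r^h \bm u + \bm b_r^h\|_2 \leq |\bm c_r^h \bm u + d_r^h|$, which would also admit points where the right-hand side of (\ref{eq22}) is negative; it is exactly the nonnegativity (\ref{eq23}) that upgrades the absolute value to the signed quantity and recovers the genuine SOC constraint. I would therefore present the two conditions as jointly necessary and sufficient rather than treating (\ref{eq24}) in isolation, and I would keep the bookkeeping of the block dimensions explicit (row vector $\bm c_r^h$, scalar $d_r^h$) so that the identification with (\ref{eq25}) is unambiguous.
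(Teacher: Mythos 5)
Your proposal is correct and follows essentially the same route as the paper's proof: reduce feasibility of the SOCP to feasibility of the CBF second-order cone constraint (\ref{eq22}), extract the sign condition (\ref{eq23}) from nonnegativity of the right-hand side, and square both sides and collect terms into the quadratic form $S_r(\bm x)$ to obtain (\ref{eq24}). You are in fact more explicit than the paper on two points the paper leaves implicit—why the objective cone and the slacked CLF constraint never restrict feasibility, and why (\ref{eq23}) is needed to upgrade the squared inequality from $|\bm c_r^h \bm u + d_r^h|$ back to the genuine SOC constraint—but these are refinements of, not departures from, the paper's argument.
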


\begin{proof}
    In order to analyze the feasibility of the SOCP (\ref{eq20}), we need to verify that the safety SOC constraint (\ref{eq22}) is feasible. Since the left hand side is always non-negative, the right hand side must be also non-negative, which leads to the first condition.

    Now, since both sides of (\ref{eq22}) are non-negative, we can take squares of both sides 
    and write the resulting expression in quadratic form.
    % and write
    % \begin{align*}
    %     (A_r^h(\bm x) \bm u+ \bm b_r^h(\bm x))^T &(A_r^h(\bm x) \bm u+ \bm b_r^h(\bm x)) \leq\\
    %     &(\bm c_r^h(\bm x) \bm u+\bm d_r^h(\bm x))^T(\bm c_r^h(\bm x) \bm u+\bm d_r^h(\bm x)),
    % \end{align*}
    and collect all the terms on the left.
    Factorizing the similar terms gives the second condition.
\end{proof}
Based on this result, we obtain a necessary condition for point-wise feasibility as stated in the following:
\begin{cor}\label{cor2}
    Given $\bm x \in \mathcal{R}_r$, if the SOCP (\ref{eq20}) is feasible, then the following condition must be satisfied
    \begin{equation}
        1 - \frac{1}{\beta_r^2} \bm{\phi}_r {\Sigma_r^h}^{-1} \bm{\phi}_r^T \leq 0,\label{eq26}
    \end{equation}
    where $\bm{\phi}_r = \begin{bmatrix}
    \varphi_r^{hf} + {\mu}_r^{h1} & \bm{c}_r^h(\bm{x})\end{bmatrix} \in \mathbb{R}^{1 \times (m+1)}$.
\end{cor}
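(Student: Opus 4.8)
The plan is to obtain (\ref{eq26}) as a direct consequence of the second feasibility condition (\ref{eq24}) in Theorem \ref{thm3}, by first rewriting $S_r$ in a compact rank-one-perturbed form and then invoking a Schur-complement characterization of positive definiteness.

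First I would assemble the blocks of $S_r$ in (\ref{eq25}) using the definitions in (\ref{eq26p}). Since $L_r$ is the matrix square root of $\Sigma_r^h$, we have $\Sigma_r^h = L_r^T L_r$, and partitioning $L_r = \begin{bmatrix} \bm l_r^1 & L_r^m \end{bmatrix}$ expresses the four blocks of $\Sigma_r^h$ as ${\bm l_r^1}^T \bm l_r^1$, ${\bm l_r^1}^T L_r^m$, and ${L_r^m}^T L_r^m$. Substituting $A_r^h = \beta_r L_r^m$ and $\bm b_r^h = \beta_r \bm l_r^1$ into (\ref{eq25}) and grouping the $\beta_r^2$ terms, each block of $S_r$ should split as $\beta_r^2$ times the matching block of $\Sigma_r^h$ minus the corresponding entry of the outer product $\bm \phi_r^T \bm \phi_r$, with $\bm \phi_r = \begin{bmatrix} d_r^h & \bm c_r^h \end{bmatrix} = \begin{bmatrix} \varphi_r^{hf} + \mu_r^{h1} & \bm c_r^h \end{bmatrix}$. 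This yields the compact identity $S_r = \beta_r^2 \Sigma_r^h - \bm \phi_r^T \bm \phi_r$, a positive-definite matrix perturbed by a rank-one term.

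Next I would argue the necessary direction. Condition (\ref{eq24}) requires a vector $\bm y = \begin{bmatrix} 1 & \bm u^T \end{bmatrix}^T$, which is nonzero, with $\bm y^T S_r \bm y \leq 0$. If $S_r$ were positive definite, then $\bm y^T S_r \bm y > 0$ for every nonzero $\bm y$, contradicting (\ref{eq24}); hence feasibility forces $S_r \not\succ 0$. To convert this into the scalar inequality, note that $\Sigma_r^h \succ 0$ (established in the proof of Theorem \ref{thm1} since $k_r$ is a valid kernel and $\sigma_n > 0$) and $\beta_r > 0$, so $\beta_r^2 \Sigma_r^h$ is invertible, and applying the two Schur complements of the bordered matrix $\left[\begin{smallmatrix} \beta_r^2 \Sigma_r^h & \bm \phi_r^T \\ \bm \phi_r & 1 \end{smallmatrix}\right]$ gives the equivalence $\beta_r^2 \Sigma_r^h - \bm \phi_r^T \bm \phi_r \succ 0 \iff \tfrac{1}{\beta_r^2} \bm \phi_r {\Sigma_r^h}^{-1} \bm \phi_r^T < 1$. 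Taking the contrapositive, $S_r \not\succ 0$ is equivalent to $1 - \tfrac{1}{\beta_r^2} \bm \phi_r {\Sigma_r^h}^{-1} \bm \phi_r^T \leq 0$, which is exactly (\ref{eq26}).

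I expect the main obstacle to be the bookkeeping in the first step: verifying that the three blocks in (\ref{eq25}), after substituting (\ref{eq26p}), recombine precisely into $\beta_r^2 \Sigma_r^h - \bm \phi_r^T \bm \phi_r$ requires carefully matching the partition of $L_r$ into its first column and last $m$ columns with the partition of $\bm \phi_r$ into $d_r^h$ and $\bm c_r^h$. The Schur-complement step is then routine, and the logical reduction to ``$S_r$ not positive definite'' is immediate; the only subtlety is that this reduction is one-directional, since the nonpositive directions of $S_r$ need not intersect the affine slice $y_1 = 1$, which is precisely why (\ref{eq26}) is stated only as a necessary condition.
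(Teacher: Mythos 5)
Your proposal is correct and is essentially the paper's own argument: both hinge on the identity $S_r = \beta_r^2 \Sigma_r^h - \bm{\phi}_r^T \bm{\phi}_r$ and the Schur-complement characterization of positive definiteness of the bordered matrix $\bigl[\begin{smallmatrix} 1 & \bm{\phi}_r \\ \bm{\phi}_r^T & \beta_r^2 \Sigma_r^h \end{smallmatrix}\bigr]$, combined with the fact that condition (\ref{eq24}) rules out $S_r \succ 0$. The only difference is presentational — the paper argues by contradiction while you run the same Schur equivalence directly as a contrapositive, and you verify the block identity explicitly where the paper states it as easily verified.
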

\begin{proof}
We prove this by contradiction. Let's assume that there exists a solution $\bm u \in \mathbb{R}^m$ of the SOCP (\ref{eq20}), which satisfies $1 - \frac{1}{\beta_r^2} \bm{\phi_r} {\Sigma_r^h}^{-1} \bm{\phi}_r^T > 0$. 
Let's define the matrix $M_r$ partitioned as
\begin{equation*}
    M_r = \begin{bmatrix}
        1 & \bm{\phi}_r \\
        \bm{\phi}_r^T & \beta_r^2 \Sigma_r^h(\bm{x})
    \end{bmatrix} \in \mathbb{R}^{(m+2) \times (m+2)}.
\end{equation*}
The Schur complements of $M_r$ is obtained by
\begin{equation*}
    \left\{
    \begin{array}{l}
        M_r / 1 = \beta_r^2 \Sigma_r^h(\bm{x}) - \bm{\phi}_r^T \bm{\phi}_r, \\
        M_r / \beta_r^2 \Sigma_r^h(\bm{x}) = 1 - \frac{1}{\beta_r^2} \bm{\phi}_r {\Sigma_r^h}^{-1} \bm{\phi}_r^T.
    \end{array}
    \right.
\end{equation*}
We can easily verify that $\beta_r^2 \Sigma_r^h=\begin{bmatrix}\beta_r \bm l_r^1 & A_r^h(\bm x)\end{bmatrix}^T\begin{bmatrix}\beta_r \bm l_r^1 & A_r^h(\bm x)\end{bmatrix}$ and $M_r / 1=S_r(\bm x)$.
We know that $\Sigma_r^h(\bm x)$ is symmetric positive-definite (SPD). So, $M_r$ is also symmetric and we can characterize its definiteness by Schur complement theorem. 
Since its lower right block is positive definite and its corresponding Schur complement $M_r/\beta_r^2 \Sigma_r^h(\bm x)>0$ by our assumption, based on \cite{boyd2004convex}, we can conclude that $M_r$ is also positive definite.

Now, given that $M_r$ is SPD, and its upper left block is always positive definite, we can conclude that $M_r/1 = S_r(\bm x)$ must be positive definite, which is a contradiction to the condition (\ref{eq24}). Thus, $1 - \frac{1}{\beta_r^2} \bm{\phi}_r {\Sigma_r^h}^{-1} \bm{\phi}_r^T \leq 0$ holds.
\end{proof}

Theorem $\ref{thm3}$ also provides the theoretical background to obtain a sufficient condition of feasibility.
\begin{cor}\label{cor3}
    Given $\bm x \in \mathcal{R}_r$, the SOCP (\ref{eq20}) is feasible at $\bm x$, if $S_r^3(\bm x)$ is negative definite.
\end{cor}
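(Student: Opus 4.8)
The plan is to reduce the claim to the exact feasibility characterization already established in Theorem~\ref{thm3}: the SOCP~(\ref{eq20}) is feasible at $\bm x$ if and only if there exists $\bm u \in \mathbb{R}^m$ satisfying both the affine condition~(\ref{eq23}) and the quadratic condition~(\ref{eq24}). It therefore suffices to exhibit a single $\bm u$ meeting both inequalities under the hypothesis $S_r^3(\bm x) \prec 0$. Expanding~(\ref{eq24}) with the block form~(\ref{eq25}) yields the scalar quadratic
\[
q(\bm u) \coloneqq S_r^1 + 2 S_r^2 \bm u + \bm u^T S_r^3 \bm u \leq 0,
\]
which is concave in $\bm u$ precisely because $S_r^3 \prec 0$.

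First I would observe that the hypothesis $S_r^3 \prec 0$ already forces $\bm c_r^h(\bm x) \neq 0$. Indeed, from~(\ref{eq25}) we have $S_r^3 = {A_r^h}^T A_r^h - {\bm c_r^h}^T \bm c_r^h$, and ${A_r^h}^T A_r^h \succeq 0$ always; if $\bm c_r^h$ were zero then $S_r^3 \succeq 0$, contradicting negative definiteness. This structural fact is what makes the affine constraint~(\ref{eq23}) compatible with~(\ref{eq24}), since it guarantees a direction along which~(\ref{eq23}) can be driven strictly positive.

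Next I would choose a ray along which both constraints are eventually satisfied. Setting the direction $\bm v = {\bm c_r^h}^T \neq 0$ and $\bm u = t \bm v$ with $t > 0$, the left-hand side of~(\ref{eq23}) becomes $d_r^h + t\,\|\bm c_r^h\|_2^2$, which is positive for all sufficiently large $t$; simultaneously $q(t\bm v) = S_r^1 + 2 t S_r^2 \bm v + t^2\, \bm v^T S_r^3 \bm v$ has strictly negative leading coefficient $\bm v^T S_r^3 \bm v < 0$, so $q(t\bm v) \to -\infty$ and hence $q(t\bm v) \leq 0$ for all large $t$. Taking $t$ beyond both thresholds produces a $\bm u$ satisfying~(\ref{eq23}) and~(\ref{eq24}) at once, and feasibility then follows from Theorem~\ref{thm3}.

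The only genuine obstacle is the middle step: confirming that $\bm c_r^h \neq 0$ is not an additional hypothesis but an automatic consequence of $S_r^3 \prec 0$. Once that is secured, the remaining ``escape to infinity'' argument is routine, because a concave quadratic is eventually non-positive along every direction while the affine term is simultaneously driven positive along the chosen ray.
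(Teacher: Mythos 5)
Your proof is correct, and it shares the overall skeleton of the paper's own argument: reduce to the characterization in Theorem~\ref{thm3}, then exhibit a ray along which the quadratic condition~(\ref{eq24}) tends to $-\infty$ while the affine condition~(\ref{eq23}) becomes positive. The difference lies in the choice of ray and the supporting nondegeneracy fact. The paper scales the eigenvector $\bm e_r^m$ of the largest eigenvalue of $S_r^3(\bm x)$, corrected by $\sgn(\bm c_r^h(\bm x)\bm e_r^m)$, and must separately establish $\bm c_r^h(\bm x)\bm e_r^m \neq 0$, which it deduces from positive definiteness of ${A_r^h}^T(\bm x) A_r^h(\bm x)$ together with ${\bm e_r^m}^T S_r^3(\bm x)\bm e_r^m < 0$. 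You instead scale the direction ${\bm c_r^h}^T(\bm x)$ itself, so the affine constraint grows like $t\|\bm c_r^h(\bm x)\|_2^2$ automatically, and the only nondegeneracy needed is $\bm c_r^h(\bm x) \neq 0$, which follows immediately since $\bm c_r^h(\bm x) = 0$ would give $S_r^3(\bm x) = {A_r^h}^T(\bm x) A_r^h(\bm x) \succeq 0$, contradicting negative definiteness. Your route avoids the eigendecomposition entirely, exploiting the fact that under $S_r^3(\bm x) \prec 0$ \emph{every} nonzero direction yields a strictly negative leading coefficient, so the only real task is making the affine constraint positive; the paper's eigenvector choice buys nothing additional for this corollary, although its structure makes visible a slightly weaker sufficient condition that both arguments in fact establish, namely that feasibility holds whenever some direction $\bm v$ satisfies $\bm v^T S_r^3(\bm x)\bm v < 0$ and $\bm c_r^h(\bm x)\bm v \neq 0$.
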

\begin{proof}
    From (\ref{eq25}), we know that $S_r^3(\bm x)$ is a symmetric matrix, thus it has real eigenvalues. Let's denote its maximum eigenvalue and the corresponding eigenvector by $\lambda_r^{m}$, $\bm{e}_r^{m}$. Then, by the assumption that $S_r^3(\bm x)$ is negative definite, we can conclude that $\lambda_r^{m}<0$ and ${\bm{e}_r^{m}}^T S_r^3(\bm x) \bm{e}_r^{m}<0$. Substituting (\ref{eq25}) into the former, we have
    \begin{equation*}
        {\bm{e}_r^{m}}^T {A_r^h}^T(\bm x) A_r^h(\bm x) \bm{e}_r^{m} - (\bm c_r^h(\bm x) \bm{e}_r^{m})^T(\bm c_r^h(\bm x)\bm{e}_r^{m})<0.
    \end{equation*}
    Since ${A_r^h(\bm x)}^TA_r^h(\bm x)$ is positive definite, $\bm c_r^h(\bm x) \bm{e}_r^{m} \neq 0$ must hold. We take a control input in the direction of this eigenvector as $\bm u_e = \alpha \sgn(\bm{c}_r^h(\bm x) \bm{e}_r^{m})\cdot\bm{e}_r^{m}, \alpha>0$. Next, we need to check if the resulting control input can satisfy the conditions of Theorem \ref{thm3}. Substituting $\bm{u}_e$ in (\ref{eq24}), we have
    \begin{equation*}
        S_r^1 + 2\alpha S_r^2 ((\bm{c}_r^h(\bm x) \bm{e}_r^{m})\cdot\bm{e}_r^{m}) + \alpha^2 {\bm e_r^{m}}^T S_r^3(\bm x) \bm e_r^{m}.
    \end{equation*}
    By choosing large enough $\alpha$, the above statement can be made negative since ${\bm e_r^m}^T S_3(\bm x) \bm e_r^m <0$.
    Also, substituting $\bm u_e$ into (\ref{eq23}), we have
    \begin{equation*}
        \bm c_r^h(\bm x)  \bm u_e + (\varphi_r^{hf} + \mu_r^{h1}) = \alpha \lvert \bm c_r^h(\bm x) \bm e_r^m \rvert + (\varphi_r^{hf} + \mu_r^{h1}).
    \end{equation*}
    Again, by choosing a large enough $\alpha$, we can make the above expression positive, which concludes the proof.
\end{proof}
\subsection{Data collection and model fitting}\label{sec3e}
We use an episodic data collection method. In the first episode, we run an initial roll-out using the CLF and CBF designed based on the nominal system. Using nominal system (\ref{eq7}), we run QP controller (\ref{eq6}) with the nominal values of the time derivative of $V(\bm x)$ and $h(\bm x)$. Then, we record the system's trajectory until it reaches an unsafe state. Next, we partition the resulting dataset into $R$ datasets $\mathcal{D}_r$ as described in Section \ref{sec3b}. Subsequently, we fit MOGPs configured with the kernel functions $k_r$ to batches $\mathcal{D}_r$, for $r=1, \dots, R$. To use (\ref{eq14}) and (\ref{eq15}) for prediction, we need to infer the hyperparameters $\bm \theta_r$ for each individual kernel $k_r^i, i\in\{1, \dots,m+1\}$ in the structure of $k_r$. Each set of parameters is obtained by minimizing the negative log marginal likelihood \cite{liu2018remarks} as
\begin{equation}
    \bm{\theta}_{r}^{opt} ={} \underset{\bm \theta_r}{\arg\min} \hspace{4pt} {-\log p({\bm \omega}_r\mid\bar X_r, \bm \theta_r)},
    \label{eq27}
\end{equation}
where for the $J_{\bm \theta_r} = -\log p({\bm \omega}_r\mid\bar X_r, \bm \theta_r)$, we have
\begin{align*}
    J_{\bm \theta_r} = &\frac{1}{2} \mathbf{\bm \omega}_r^{\mathrm{T}}\left[K_r(\bar X_r, \bar X_r)+\sigma_n^2 I\right]^{-1} \mathbf{\bm \omega}_r\\
    &+\frac{1}{2} \log \left|K_r(\bar X_r, \bar X_r)+\sigma_n^2 I\right|+\frac{N_r}{2} \log 2 \pi,
\end{align*}
where $\bar X_r = (X_r, Y_r)$ such that $((X_r, Y_r), \bm \omega_r) \in \mathcal{D}_r$. $\bm \omega_r$ is either the vector of outputs ${\bm \omega}_j^V$ or ${\bm \omega}_j^h$ for approximating $d_r^V$ or $d_r^h$, respectively.
Then we use the uncertainty-aware SOCP optimization (\ref{eq19}) with a high probability bound ($1 - \delta = 0.95$) and run the system with the proposed method until the system reaches an unsafe state or the problem becomes infeasible. The data collected during each episode will be added to the dataset and the hyperparameters $\bm \theta_r$ will be optimized using (\ref{eq27}) with the new dataset. We repeat this process until the simulation completed without encountering any unsafe behavior.
% \section{SIMULATION RESULTS}
\section{Simulation results}
In this section, we highlight the effectiveness of the proposed method by applying it to a switching adaptive cruise control (ACC) system that is moving on varying road conditions. For comparison, we also implemented a baseline GP-CBF-CLF-SOCP \cite{castaneda2021pointwise} that fits a single GP to piecewise residuals and a nominal QP controller. The GP models were trained using GPyTorch \cite{gardner2018gpytorch}.

Consider a switching ACC system modeled by 
\begin{align}
    \dot {\bm{x}} &= \sum^R_{r=1} \delta_r \left( f_r(\bm{x})+ g_r(\bm{x})\right) u,\nonumber\\ 
    f_r(\bm x) &= \begin{bmatrix}
        x_2 \\ -\frac{F_r(x_2)}{m}
    \end{bmatrix}, g_r(\bm x) = \begin{bmatrix}
        0 \\ \frac{c_r}{m}
    \end{bmatrix},
    \label{eq28}
\end{align}
    % \dot{\bm{x}} =\underbrace{\begin{bmatrix}
    %     x_2 \\ -\frac{F_r(x_2)}{m}
    % \end{bmatrix}}_{f_r(\bm x)} + \underbrace{\begin{bmatrix}
    %     0 \\ \frac{1}{m}
    % \end{bmatrix}}_{g_r(\bm x)} u,
where the system state $\bm x = \begin{bmatrix} x_1 & x_2\end{bmatrix}^T \in \mathcal{X}$ containing the position of the ego car $x_1$, and its forward velocity $x_2$ in the state space $\mathcal{X} = [0,700] \times [0,30]$. The control input is the wheel force $u \in \mathbb{R}$. The mass of the ego car is denoted by $m$ and the rolling resistance is modeled by $F_r(x_2) = f_r^0+f_r^1 x_2+ f_r^2 x_2^2$.

The switching behavior of the system arises from the changing road conditions which impact the rolling resistance $F_r(x_2)$ and traction control of the vehicle $c_r$. As the vehicle encounters different road surfaces $F_r(x_2)$ and $c_r$ will change, leading to variations in the system dynamics. We consider hyper-rectangular partitions of $\mathcal{X}$ with bounds $\mathcal{R}_1 = ([0,15)\cup(25,700])\times [0,30]$, and $\mathcal{R}_2 = [15,25]\times [0,30]$ that represent two different road conditions. The state-dependent switching signal $\delta_r$ in (\ref{eq1}) determines the active region index at each time step. We only have access to the nominal model and the true system is unknown in both operating regions. The parameters for each model are shown in Table \ref{tab1}.

\begin{table}[htbp]
\centering
\caption{\small Parameters for the nominal and true system in $\mathcal{R}_1, \mathcal{R}_2$}
\begin{tabular}{|l|l|l|l|l|l|}
\hline
 &$m(kg)$ & $f_1^0$ & $f_1^1$ & $f_1^2$ & $c_r$ \\
\hline
True $\mathcal{R}_1$ & 3300 & 0.2 & 10 & 0.5 & 1 \\
True $\mathcal{R}_2$ & 3300 & 1 & 50 & 4.5 & 0.5 \\
Nominal & 1050 & 0.1 & 15 & 2.25 & 1 \\
\hline
\end{tabular}\label{tab1}
\end{table}
% The distance between the ego car and the front car is denoted by $z$. The front car is moving at a constant velocity $v_0 = 10\hspace{2pt}m/s$. Thus, the distance between the cars is governed by $\dot z = v_0 - x_2$, where we have the initial distance $z(t_0) = z_0$. The control objective is to reach a target speed $v_d = 18\hspace{2pt}m/s$ while maintaining a safe distance to the vehicle in front.

% We consider the Lyapunov function $V(\bm x) = (x_2 - v_d)^2$ to stabilize the velocity at the desired value. Also, we consider CBF $h(\bm x) = z - T_h x_2$, where $T_h = 1.6$ is a predefined threshold to maintain a safe distance from the car in front proportional to the current velocity. 
% % Consequently, the safe distance is increasing with higher speeds for safety reasons.
The distance between the ego car and the front car, denoted by $z$, is governed by the equation $\dot z = v_0 - x_2$, where $v_0 = 10\hspace{2pt}m/s$ represents the constant velocity of the front car. The control objective is to achieve a target speed of $v_d$ while ensuring a safe distance from the front vehicle. To stabilize the velocity at the desired value, a Lyapunov function $V(\bm x) = (x_2 - v_d)^2$ is considered. Additionally, a CBF $h(\bm x) = z - T_h x_2$, with $T_h = 1.6$, is utilized to maintain a safe distance proportional to the current velocity.

We selected infinitely differentiable squared exponential kernel \cite{williams2006gaussian}, for $k_i, i=1,2$ base kernels in the configuration of $k_r$. We used the process in Section \ref{sec3e} to collect the dataset and optimize the batch MOGP model hyperparameters, which collected a total number of $390$ data samples in the initial roll-out and a total of $999$ data samples in all episodes.
The simulation is started from the initial state $\bm x_0 = \begin{bmatrix}0 & 14\end{bmatrix}^T$ and $z_0 = 140\hspace{2pt}m$. The ego car starts in the region $\mathcal{R}_1$ from a safe distance behind the front car. First, the controller increases the velocity to converge to the desired speed resulting in a decrease in $z$. Then, the ego car transits into region $\mathcal{R}_2$ and approaches the front car which activates the safety constraint (\ref{eq19c}). Consequently, the controller must reduce the velocity to keep the safe distance. Due to the adverse effect of the piecewise residuals, the Single GP-SOCP and nominal QP controller cannot maintain the safe distance. % which increases the risk of collision with the front car. 
However, the proposed MOGP-SOCP controller effectively avoids this unsafe behavior by leveraging the batch MOGP design. A simulation video showing the performance of the nominal and MOGP-SOCP controller is provided. Figure \ref{fig1} shows a snapshot of a simulation when all controllers result in the closest distance to the front car. It can be verified that only MOGP-SOCP controllers can keep the safe distance during the simulation.

% \begin{figure}
%     \centering
%     \begin{subfigure}{.25\textwidth}
%         \centering
%         \includegraphics[width=.35\linewidth]{R2.png}
%         \caption{1}
%         \label{fig:sub1}
%     \end{subfigure}%
%     \begin{subfigure}{.25\textwidth}
%         \centering
%         \includegraphics[width=.35\linewidth]{Nom.png}
%         \caption{2}
%         \label{fig:sub2}
%     \end{subfigure}%
%     \begin{subfigure}{.25\textwidth}
%         \centering
%         \includegraphics[width=.35\linewidth]{MOGP.png}
%         \caption{3}
%         \label{fig:sub3}
%     \end{subfigure}%
%     \caption{Main caption}
%     \label{fig:main}
% \end{figure}

% \begin{figure}
%     \centering
%     \subfloat[\label{fig1a}]{\includegraphics[width=0.3\linewidth]{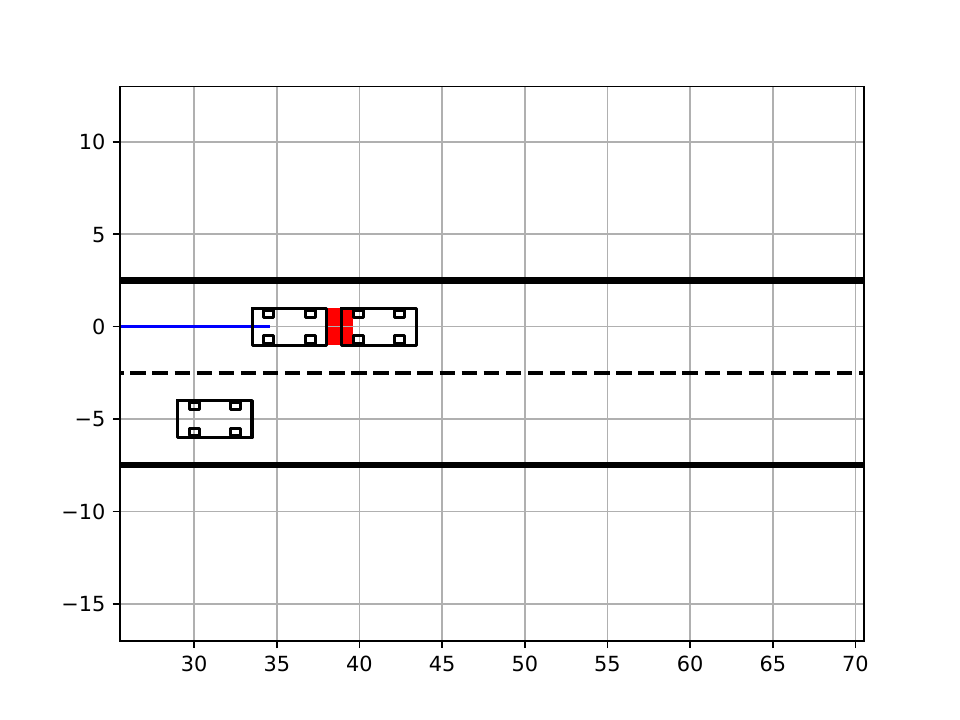}}
%     \setcounter{subfigure}{1}  % set next caption to a
%     \hspace{5pt}
%     \subfloat[\label{fig1b}]{\includegraphics[width=0.3\linewidth]{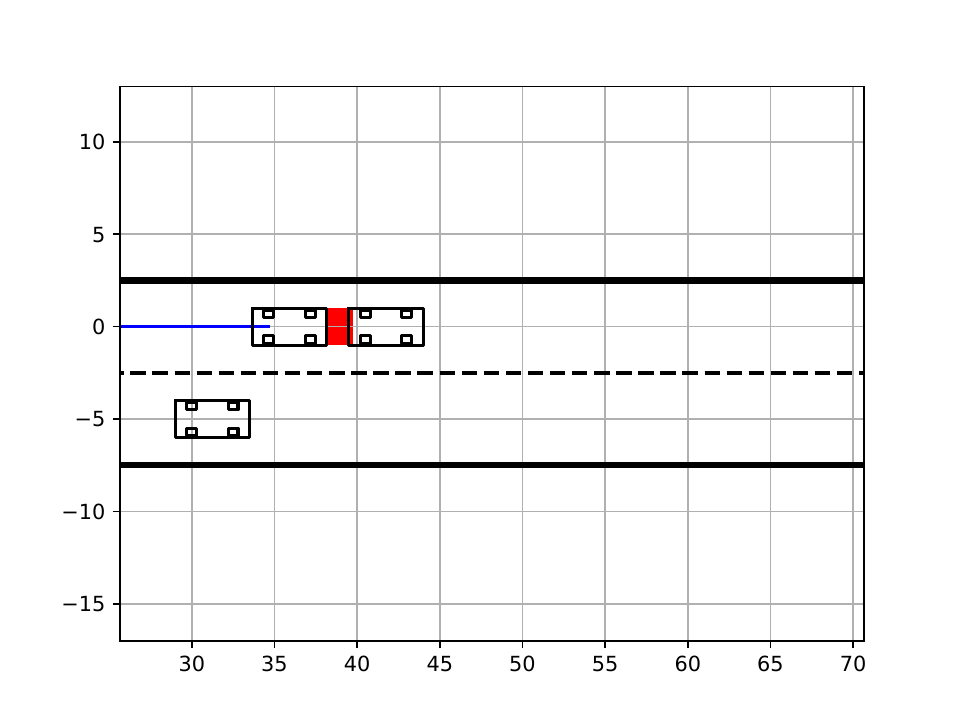}}    \setcounter{subfigure}{2} % set next caption to b
%     \hspace{5pt}
%     \subfloat[\label{fig1c}]{\includegraphics[width=0.3\linewidth]{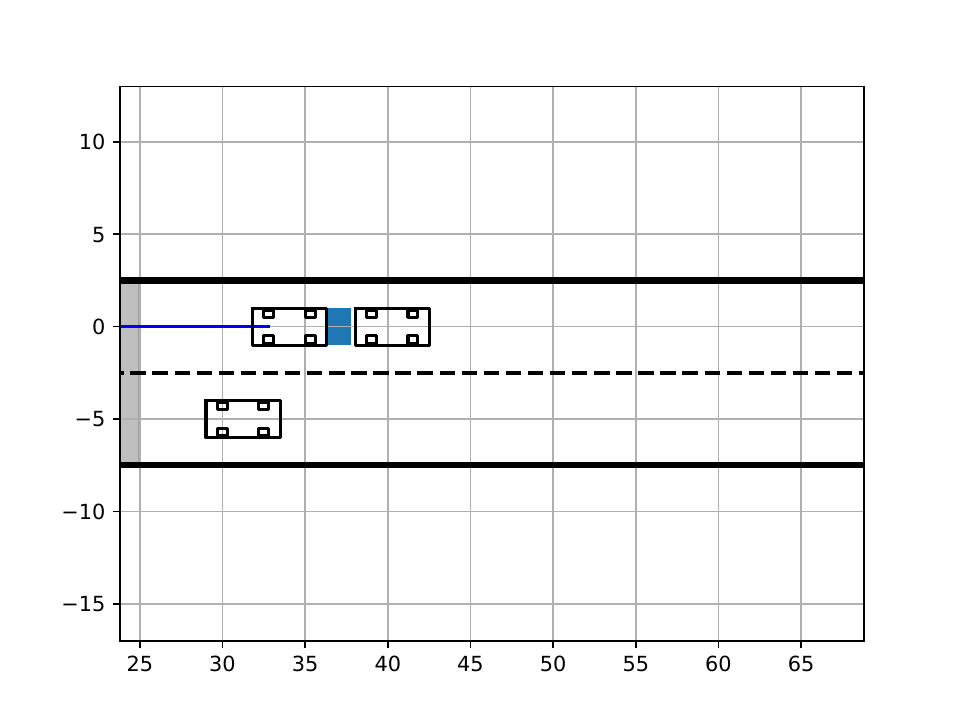}}
%     \setcounter{subfigure}{3}\\ % set next caption to b
%     \caption{\small Snapshots of the simulation. (a) The grey shaded area representing $\mathcal{R}_2$, the region $\mathcal{R}_1$ covers other parts of the road, (b) Nominal QP controller violates the safety distance, (c) MOGP-based controller ensures safety.}\label{fig1}
% \end{figure}

\begin{figure}


\minipage{0.33\linewidth}
  \includegraphics[width=\linewidth]{Nom_z_min.pdf}
  % \caption{A really Awesome Image}\label{fig:awesome_image1}
\endminipage\hfill
\minipage{0.33\linewidth}
  \includegraphics[width=\linewidth]{SGP_z_min.pdf}
  % \caption{A really Awesome Image}\label{fig:awesome_image2}
\endminipage\hfill
\minipage{0.33\linewidth}%
  \includegraphics[width=\linewidth]{MOGP_z_min.pdf}
  % \caption{A really Awesome Image}\label{fig:awesome_image3}
\endminipage
  \caption{\footnotesize Snapshots of the simulation to show the unsafe (red color) and safe (green color) distance between cars in different implementations. The nominal QP controller (left) and single GP-SOCP controller (middle) violate the safety distance. MOGP-based controller (right) ensures safety. See https://youtu.be/8nmcIIJSGJE for the simulation video.}\label{fig1}
\end{figure}

In Figure \ref{fig2}, we aim to compare the performance of the proposed method with the baseline and nominal controllers. For your reference, we provided the trajectories of the oracle true design. It has been illustrated in Figure \ref{fig2} that the single GP-SOCP controller did not converge to the desired velocity, while MOGP-SOCP and nominal QP controllers could reach $v_d$ in the first $4$ seconds. Then, the ego car transited to $\mathcal{R}_2$ at $t = 6\hspace{2pt}s$ which affects the rate of change of the velocity. As it approaches the front car, the controllers must decrease the velocity in response to remain safe. The CBF $h(\bm x)$ and control input $u$ are also illustrated in this figure. This plot shows that the single GP-SOCP and nominal QP controllers violate the safety condition approximately at $t = 15\hspace{2pt}s$ and $t = 17\hspace{2pt}s$, respectively. However, the MOGP-SOCP controller could maintain the safe distance to the front car during the simulation. Also, it can be verified from the control input trajectory that the MOGP-SOCP controller could recover the true system's performance. However, the single GP-SOCP controller does not generate a smooth control input which is a result of poor uncertainty quantification between the switching times.
%  \begin{figure}[t]
%     \centering
%     \includegraphics[width=0.8\columnwidth]{PWRACC_z_v.pdf}
%     \caption{\small Switching ACC system state $x_2$ and distance between two cars $z$ for MOGP-SOCP (yellow) compared to the single GP-SOCP (cyan), nominal QP (blue), and true design (red dashed).}
%     % \caption{Evolution of ACC state variables in the presence of uncertainty when applying GP-based SOCP-HOCBF (yellow) compared to the nominal model-based QP-HOCBF (blue) and the QP-HOCBF design based on the true system (red dashed).}
%     \label{fig2}
% \end{figure}

 \begin{figure}[t]
    \centering
    \includegraphics[width=0.8\columnwidth]{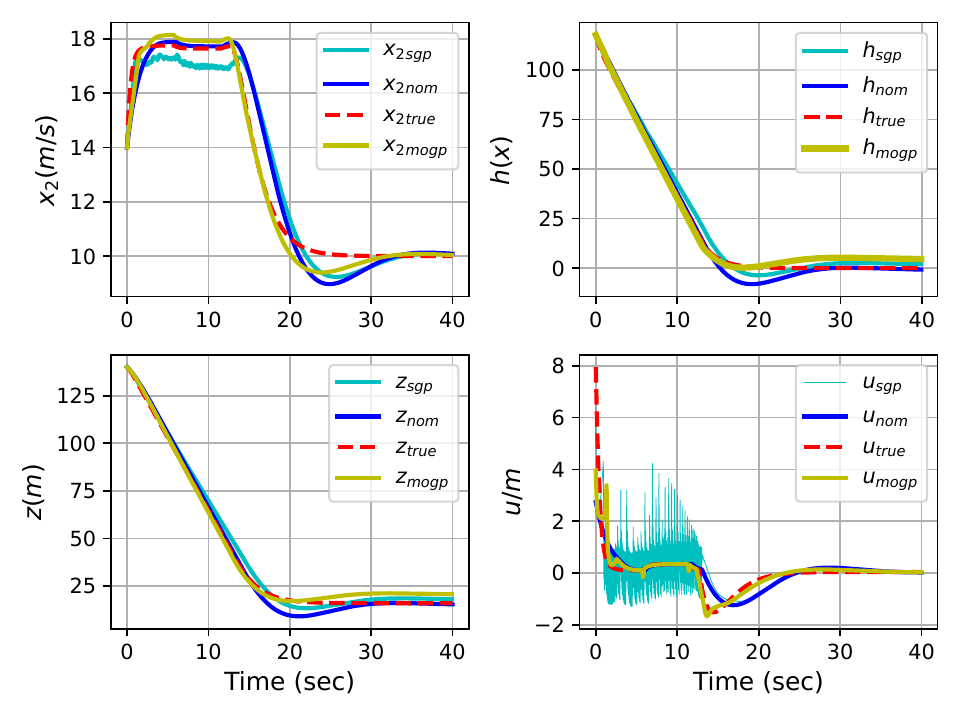}
    \caption{\footnotesize Comparison of the MOGP-SOCP (yellow), single GP-SOCP (cyan), nominal QP (blue), and true design (red dashed). System state $x_2$ and distance between two cars $z$ (left column). The CBF and control input $u$ (right column).}
    \label{fig2}
\end{figure}

% \begin{figure}[htbp]
%     \centering
%     \includegraphics[width=0.8\columnwidth]{PWRACC_h.pdf}
%     \caption{The control input of the GP-based SOCP-HOCBF (yellow) compared to the nominal model-based QP-HOCBF (blue) and the QP-HOCBF design based on the true system (red dashed) applied to the ACC system}
%     \label{fig2}
% \end{figure}
% \section{CONCLUSION}
\section{Conclusion}

In this paper, A batch MOGP framework is developed to approximate the effect of the uncertainty on the CLF and CBF constraints. The switching dynamics of the true system result in piecewise residuals in CLF and CBF constraints corresponding to $R$ regions that cover the state space.
A batch MOGP model is designed to capture the piecewise residuals in each region. Then, the resulting constrained optimization problem with the uncertainty-aware chance constraint is converted into a SOCP. This optimization is proven to be convex and can be solved in real-time. Finally, the feasibility of the resulting SOCP is addressed.
Future works will focus on eliminating the assumption of no impulse effects and non-overlapping regions in the switching system.

% \begin{figure}[tp]
%     \centering
%     \includegraphics[width=0.8\columnwidth]{PWRACC_h_u.pdf}
%     \caption{\small The CBF and control input $\bm u$ of the MOGP-SOCP (yellow) compared to the single GP-SOCP (cyan), nominal QP (blue), and true design (red dashed) applied to the switching ACC system}
%     \label{fig3}
% \end{figure}

\bibliographystyle{plain}
\bibliography{References}

\begin{thebibliography}{10}

\bibitem{aali2024learning}
Mohammad Aali and Jun Liu.
\newblock Learning high-order control barrier functions for safety-critical control with gaussian processes.
\newblock {\em arXiv preprint arXiv:2403.09573}, 2024.

\bibitem{ECC2024}
Mohammad Aali and Jun Liu.
\newblock Learning piecewise residuals of control barrier functions for safety of switching systems using multi-output gaussian processes.
\newblock In {\em Proc. of ECC}, 2024.

\bibitem{ames2019control}
Aaron~D Ames, Samuel Coogan, Magnus Egerstedt, Gennaro Notomista, Koushil Sreenath, and Paulo Tabuada.
\newblock Control barrier functions: Theory and applications.
\newblock In {\em Proc. of ECC}, pages 3420--3431. IEEE, 2019.

\bibitem{bishop2006pattern}
Christopher~M Bishop and Nasser~M Nasrabadi.
\newblock {\em Pattern Recognition and Machine Learning}, volume~4.
\newblock Springer, 2006.

\bibitem{blanchini2008set}
Franco Blanchini, Stefano Miani, et~al.
\newblock {\em Set-theoretic methods in control}, volume~78.
\newblock Springer, 2008.

\bibitem{boyd2004convex}
Stephen~P Boyd and Lieven Vandenberghe.
\newblock {\em Convex Optimization}.
\newblock Cambridge University Press, 2004.

\bibitem{branicky1998multiple}
Michael~S Branicky.
\newblock Multiple lyapunov functions and other analysis tools for switched and hybrid systems.
\newblock {\em IEEE Transactions on automatic control}, 43(4):475--482, 1998.

\bibitem{castaneda2021pointwise}
Fernando Casta{\~n}eda, Jason~J Choi, Bike Zhang, Claire~J Tomlin, and Koushil Sreenath.
\newblock Pointwise feasibility of gaussian process-based safety-critical control under model uncertainty.
\newblock In {\em Proc. of CDC}, pages 6762--6769. IEEE, 2021.

\bibitem{csomay2021episodic}
Noel Csomay-Shanklin, Ryan~K Cosner, Min Dai, Andrew~J Taylor, and Aaron~D Ames.
\newblock Episodic learning for safe bipedal locomotion with control barrier functions and projection-to-state safety.
\newblock In {\em Proc. of L4DC}, pages 1041--1053. PMLR, 2021.

\bibitem{daafouz2002stability}
Jamal Daafouz, Pierre Riedinger, and Claude Iung.
\newblock Stability analysis and control synthesis for switched systems: a switched lyapunov function approach.
\newblock {\em IEEE transactions on automatic control}, 47(11):1883--1887, 2002.

\bibitem{d2023stochastic}
Leroy D’Souza and Yash~Vardhan Pant.
\newblock Stochastic hybrid model predictive control using gaussian processes for systems with piecewise residual dynamics.
\newblock In {\em Proc. of ACC}, pages 2670--2677. IEEE, 2023.

\bibitem{fan2020bayesian}
David~D Fan, Jennifer Nguyen, Rohan Thakker, Nikhilesh Alatur, Ali-akbar Agha-mohammadi, and Evangelos~A Theodorou.
\newblock Bayesian learning-based adaptive control for safety critical systems.
\newblock In {\em Proc. of ICRA}, pages 4093--4099. IEEE, 2020.

\bibitem{gardner2018gpytorch}
Jacob Gardner, Geoff Pleiss, Kilian~Q Weinberger, David Bindel, and Andrew~G Wilson.
\newblock Gpytorch: Blackbox matrix-matrix gaussian process inference with gpu acceleration.
\newblock {\em Advances in neural information processing systems}, 31, 2018.

\bibitem{hale2009ordinary}
Jack~K Hale.
\newblock {\em Ordinary differential equations}.
\newblock Courier Corp., 2009.

\bibitem{jagtap2020control}
Pushpak Jagtap, George~J Pappas, and Majid Zamani.
\newblock Control barrier functions for unknown nonlinear systems using gaussian processes.
\newblock In {\em Proc. of CDC}, pages 3699--3704. IEEE, 2020.

\bibitem{khalil2002nonlinear}
Hassan~K Khalil.
\newblock {\em Nonlinear Systems}.
\newblock Prentice Hall, 2002.

\bibitem{liu2018remarks}
Haitao Liu, Jianfei Cai, and Yew-Soon Ong.
\newblock Remarks on multi-output gaussian process regression.
\newblock {\em Knowledge-Based Systems}, 144:102--121, 2018.

\bibitem{liu2011input}
Jun Liu, Xinzhi Liu, and Wei-Chau Xie.
\newblock Input-to-state stability of impulsive and switching hybrid systems with time-delay.
\newblock {\em Automatica}, 47(5):899--908, 2011.

\bibitem{marley2021synergistic}
Mathias Marley, Roger Skjetne, and Andrew~R Teel.
\newblock Synergistic control barrier functions with application to obstacle avoidance for nonholonomic vehicles.
\newblock In {\em Proc. of ACC}, pages 243--249. IEEE, 2021.

\bibitem{meng2023lyapunov}
Yiming Meng and Jun Liu.
\newblock Lyapunov-barrier characterization of robust reach--avoid--stay specifications for hybrid systems.
\newblock {\em Nonlinear Analysis: Hybrid Systems}, 49:101340, 2023.

\bibitem{prajna2004safety}
Stephen Prajna and Ali Jadbabaie.
\newblock Safety verification of hybrid systems using barrier certificates.
\newblock In {\em Proc. of HSCC}, pages 477--492. Springer, 2004.

\bibitem{romdlony2016stabilization}
Muhammad~Zakiyullah Romdlony and Bayu Jayawardhana.
\newblock Stabilization with guaranteed safety using control lyapunov--barrier function.
\newblock {\em Automatica}, 66:39--47, 2016.

\bibitem{srinivas2009gaussian}
Niranjan Srinivas, Andreas Krause, Sham~M Kakade, and Matthias Seeger.
\newblock Gaussian process optimization in the bandit setting: No regret and experimental design.
\newblock {\em arXiv preprint arXiv:0912.3995}, 2009.

\bibitem{wang2018safe}
Li~Wang, Evangelos~A Theodorou, and Magnus Egerstedt.
\newblock Safe learning of quadrotor dynamics using barrier certificates.
\newblock In {\em Proc. of ICRA}, pages 2460--2465. IEEE, 2018.

\bibitem{williams2006gaussian}
Christopher~KI Williams and Carl~Edward Rasmussen.
\newblock {\em Gaussian Processes for Machine Learning}, volume~2.
\newblock MIT Press, 2006.

\end{thebibliography}
\end{document}